\definecolor{highlightNEW}{named}{black}
\newtheorem{theorem}{Theorem}[section] 
\newtheorem{example}[theorem]{Example} 
\newtheorem{hypothesis}[theorem]{Hypothesis} 
\newtheorem{lemma}[theorem]{Lemma} 
\newtheorem{proposition}[theorem]{Proposition} 
\newtheorem{remark}[theorem]{Remark} 
\renewcommand{\d}{\,\mathrm{d}}
\newdimen\CdotAxis
\newcommand*{\CdotAux}[3]{%
  {%
    \settoheight\CdotAxis{$#2\vcenter{}$}%
    \sbox0{%
      \raisebox\CdotAxis{%
        \scalebox{#1}{%
          \raisebox{-\CdotAxis}{%
            $\mathsurround=0pt #2#3$%
          }%
        }%
      }%
    }%
    \dp0=0pt %
    \sbox2{$#2\bullet$}%
    \ifdim\ht2<\ht0 %
      \ht0=\ht2 %
    \fi
    \sbox2{$\mathsurround=0pt #2#3$}%
    \hbox to \wd2{\hss\usebox{0}\hss}%
  }%
}
\def\mathcolor#1#{\@mathcolor{#1}}
\def\@mathcolor#1#2#3{%
  \protect\leavevmode
  \begingroup
    \color#1{#2}#3%
  \endgroup
}
\let\oldalpha\alpha
\renewcommand{\alpha}{\mathcolor{highlightNEW}{\oldalpha}}
\newcommand{\email}[1]{\href{mailto:#1}{#1}}
\newcommand\blfootnote[1]{%
  \begingroup
  \renewcommand\thefootnote{}\footnote{#1}%
  \addtocounter{footnote}{-1}%
  \endgroup
}
\title{\textcolor{Navy}{\textsc{Convexity adjustments \`a la Malliavin}}}
\author[1,2]{David Garcia-Lorite\thanks{Corresponding author, \email{david.garcia.lorite@gmail.com}}}
\author[3]{Ra\'{u}l Merino}
\affil[1]{CaixaBank, Quantitative Analyst Team, Plaza de Castilla, 3, 28046 Madrid, Spain,}
\affil[2]{Facultat de Matem\`{a}tiques i Inform\`{a}tica, Universitat de Barcelona, \authorcr Gran Via 585, 08007 Barcelona, Spain,\vspace*{3pt}}
\affil[3]{VidaCaixa S.A., Market Risk Management Unit, \authorcr C/Juan Gris, 2-8, 08014 Barcelona, Spain.}
\date{\normalfont\small\today}
\begin{document}

\maketitle
\begin{abstract}
In this paper, we develop a novel method based on Malliavin calculus to find an approximation for the convexity adjustment for various classical interest rate products. Malliavin calculus provides a simple way to get a template for the convexity adjustment. We find the approximation for Futures, OIS Futures, FRAs, and CMSs under a general family of the one-factor Cheyette model. We have also seen the excellent quality of the numerical accuracy of the formulas obtained.
\end{abstract}

\blfootnote{The authors would like to thank Elisa Al\`os, Dariusz Gatarek, and Josep Vives for helpful discussions.}

\section{Introduction}
Mathematical finance aims to find a methodology to price consistently all the instruments quoted in the market. When working with fixed income derivatives, a classic research topic is the introduction of a price adjustment to achieve this. This adjustment is called convexity adjustment. It is non-linear and depends on the interest rate model.  

There are several reasons to include this type of adjustment. One of them is to incorporate futures on the yield curve construction. Futures and other fixed-income instruments are quoted differently. The firsts are linear against the yield, but the others are not. Therefore, the changes in value and yield of different contracts are different. This difference will depend on the volatility and correlation of the yield curve.

But it is not the only one. The fixed-income market has several features changing the schedule of payments. For example, in a swap in arrears, the floating coupon fixing and payment are on the same date. Or in a CMS swap, the floating rate is linked to a rate longer than the floating length. Any customization of an interest rate product based on changing time, currency, margin, or collateral will require a convexity adjustment. Deep down, by making these changes, we are mixing the martingale measures. 

Convexity adjustments have become popular again. Not only by the increase in volatility in the markets. In addition, as a consequence of the transition in risk-free rates from the IBOR (InterBank Offered Rates) indices to the ARR (Alternative Reference Rates) indices, also called RFR. Both indices try to represent the same thing, the risk-free rate, but they are fundamentally different. While the former represents the average rate at which Panel Banks believe they could borrow money, the latter is calculated backward based on transactions. Therefore, these new products need their corresponding convexity adjustment. 

The first references on the convexity adjustment were \cite{RitchkenS}, \cite{Flesaker} and \cite{BrothertonIben}, published almost simultaneously. A convexity formula for averaging contracts was found in \cite{RitchkenS}. Flesaker derived a convexity adjustment for computing the expected Libor rate under the Ho-Lee model in a continuous and discrete setting in \cite{Flesaker}. \cite{BrothertonIben} used the Taylor expansion on the inverse function for calculating the convexity adjustment. In the following years, several improvements were made. For example, the convexity adjustment was extended to other payoffs in \cite{Hull06}. \cite{Hart} improved the Taylor expansion. \cite{KirikosNovak} derived the convexity adjustment for the Hull-White model. Afterwards, we can find papers that extend the convexity adjustment to different payoffs, see \cite{Benhamou00WC} or \cite{Hagan03}. Or by applying alternative techniques such as the change of measure in \cite{Pelsser}, a martingale approach in \cite{Benhamou00} or the effects of stochastic volatility in \cite{PiterbargRenedo} and \cite{HaganWoodward20}.

In the present paper, we find an alternative way to calculate the convexity adjustment for a general interest rate model. The idea is to use the It\^o's representation theorem. Unfortunately, the theorem does not give an insight into how to calculate the elements therein. Therefore, it is necessary to introduce basic concepts of Malliavin calculus to apply the Clark-Ocone representation formula.

The structure of the paper is as follows. In Section \ref{sec:Malliavin}, we give a brief introduction to Malliavin calculus. In Section \ref{sec:Notation}, we provide preliminary information and discuss the notation used in the article. The notation introduced will not be repeated unless necessary to guide the reader through the results. In Section \ref{sec:CA}, we compute the convexity adjustment for several payoffs commonly negotiated in the interest rate trading desks. We also present some numerical experiments to check the accuracy of the analytical results obtained. We provide the code for these examples at \url{https://github.com/Dagalon/GeneralConvexityAdjustment}.
The conclusions can be found in Section \ref{sec:Conclusion}, as well as future lines of research to explore.

\section{Basic introduction to Malliavin calculus}\label{sec:Malliavin}
Malliavin calculus is an infinite-dimensional calculus in a Gaussian space. In other words, this is a theory that provides a way to calculate the derivatives of random variables defined in a Gaussian probability space with respect to the underlying noise. The initial objective of Malliavin was the study of the existence of densities of Wiener functionals such as solutions of stochastic differential equations. But, nowadays, it has become an important tool in stochastic analysis due to the increase in its applications. Some of these applications include stochastic calculus for fractional Brownian motion, central limit theorems for multiple stochastic integrals, and an extension of the Itô formula for anticipative processes, but especially mathematical finance. For example, we can apply Malliavin calculus for computing hedging strategies, Greeks, or obtain price approximations. See, for example, \cite{AlosLorite} or \cite{Nualart} for more general content.\\

In our case, we are interested in using the Malliavin calculus to apply the Clark–Ocone representation theorem. But, first of all, let's introduce some basic concepts.\\ 

Now, we introduce the derivative operator in the Malliavin calculus sense and the divergence operator to establish the notation that we use in the remainder of the paper.\\

Consider $W=\left\{W_{t}, t\in \left[0,T\right]\right\}$ a Brownian motion defined on a complete probability space $\left(\Omega, \mathcal{F}, \mathbb{P}\right)$ such that  $\mathcal{F}$ is generated by $W$, equipped with its Brownian filtration $(\mathcal{F}_{t})_{t\geq0}$. Let $H=L^{2}(\left[0,T\right])$ and denote by 
\begin{eqnarray*}
W(t) := \int^{T}_{0} h(s) \d W_{s},
\end{eqnarray*}
the It\^o integral of a deterministic function $h \in H$, also known as Wiener integral. Let $\mathcal{S}$ be the set of smooth random variables of the form
\begin{eqnarray*}
F=f\left(W(t_{1}), \ldots, W(t_{n})\right)
\end{eqnarray*}
with $t_{1}, \ldots, t_{n} \in \left[0,T\right]$ and $f$ is a infinitely differentiable bounded function.

\medbreak

The derivative of a random variable $F$, $D_{s}F$, is defined as the stochastic process given by
\begin{eqnarray*}
D_{s}F = \sum^{n}_{i=1} \frac{\partial f}{\partial x_{i}}\left(W(t_{1}), \ldots, W(t_{n})\right)1_{\left[0, t_{j}\right]}(s), \hspace{0.4cm} s \in \left[0,T\right].
\end{eqnarray*}


\cite{Nualart} stated that these operators are closable from $L^{p}(\Omega)$ into $L^{p}(\Omega; L^{2}\left[0,T\right])$ for any $p\geq1$, and we denote by $\mathbb{D}^{n,p}$ the closure of $\mathcal{S}$ with respect to the norm
\begin{eqnarray*}
\left\|F\right\|_{n,p} := \left(E\left[F\right]^{p} + \sum^{n}_{i=1} E\left\|D^{i} F\right\|^{p}_{L^{2}\left(\left[0,T\right]^{i}\right)} \right)^{\frac{1}{p}}.
\end{eqnarray*}

\medbreak 

We define $\delta$ as the adjoint of derivative operator $D$, also referred to as the Skorohod integral. The domain of $\delta$, denoted by $Dom\text{ }\delta$, is the set of elements $u \in L^{2}([0,T] \times \Omega)$ such that there exists $\delta(u) \in L^{2}(\Omega)$ satisfying the duality relation
\begin{eqnarray*}
E\left[\delta(u) F \right] = E\left[\int^{T}_{0} D_{s} F u_{s} \d s\right].
\end{eqnarray*}

The operator $\delta$ is an extension of the Itô integral in the sense that the set $L^{2}_{a}([0,T] \times \Omega)$ of square integrable and adapted processes is included in $Dom\text{ }\delta$ and the operator $\delta$ restricted to $L^{2}_{a}([0,T] \times \Omega)$ coincides with the Itô stochastic integral.

\medbreak

For any $u \in Dom\text{ }\delta$, we will use the following notation
\begin{eqnarray*}
\delta(u)=\int^{T}_{0}u_{s}\d W_{s}.
\end{eqnarray*}

\medbreak

The representation of functionals of Brownian motion by stochastic integrals, also known as martingale representation, has been widely studied over the years. It states that if $F$ is a square-integrable random variable, there exists a unique adapted process $\varphi$ in $L^{2}(\Omega \times \left[0,T\right]; \mathbb{R})$ such that
\begin{eqnarray*}
F=E\left[F\right] +\int^{T}_{0} \varphi^{i}_{s} dW^{i}_{s}.
\end{eqnarray*}
In other words, there exists a unique martingale representation or, more precisely, the integrand $\varphi$ in the representation exists and is unique in $L^{2}(\Omega \times \left[0,T\right]; \mathbb{R})$.

\medbreak

Unfortunately, it is not easy to find an analytic representation of the process $\varphi$. Here, the Malliavin calculus helps us to find a solution. When the random variable $F$ is Malliavin differentiable, the process $\varphi$ appearing in It\^o's representation theorem, is given by
\begin{eqnarray*}
\varphi^{i}=E\left[D^{W}_{s}F|\mathcal{F}_{s}\right].
\end{eqnarray*}
In fact,
\begin{eqnarray}\label{clark-okone}
F=E\left[F\right] + \int^{T}_{0} E\left[D^{W}_{s}F|\mathcal{F}_{s}\right] dW_{s}
\end{eqnarray}
is the Clark-Ocone representation formula.

\section{Preliminaries and notation}\label{sec:Notation}
In this section, we give the basic preliminaries and notation necessary throughout the paper.
\subsection{A tale of two curves}
Consider a continuous-time economy where zero-coupon bonds are traded for all maturities. The price at time $t$ of a zero-coupon bond with maturity $T$ is denoted by $P(t,T)$ where $0\leq t \leq T$. Clearly, $P(T,T)=1$. The compounded instantaneous forward rate is defined as:
\begin{eqnarray*}
f(t,T)= -\partial_{T}\ln P(t,T)
\end{eqnarray*}
and the spot interest rates as:
\begin{eqnarray*}
r(t)=\lim_{T\longrightarrow t} -\partial_{T}\ln P(t,T).
\end{eqnarray*}
Therefore, the zero-coupon bond price is given by
\begin{eqnarray*}
P(t,T)=\exp\left(-\int^{T}_{t} f(t,u) du\right).
\end{eqnarray*}


Before the financial crisis, there was a single curve framework based on the same curve for discounting and forecasting. Since then, the market has adopted a multi-curve approach with two different curves: the discount curve and the estimation curve. We will use the following notation:
\begin{itemize}
	\item The \textbf{\textit{discount forward curve}} is built with OIS instruments which are considered the best approximation for the risk-free rate. We will denote the forward discount rate curve by $f_{ois}(t,T)$ and the discount curve by $P_{ois}(t,T)$.

	\item The \textbf{\textit{estimation forward curve}} is chosen based on the maturity of the underlying rate. Until the crisis, the spread between the OIS and the Ibor was negligible. For example, the OIS 6M and Ibor 6M. Nowadays, due to credit and liquidity reasons, there is a spread between them. As a consequence, the estimation curve is tenor-dependent. We will denote the forward estimation rate curve by $f_{E}(t,T)$ and the estimation discount curve by $P_{E}(t,T)$. 
	
	\item The \textbf{\textit{basis forward curve}} is the difference between the estimation forward curve and the discount forward curve, i.e. $s(t,T)= f_{E}(t,T) - f_{ois}(t,T)$. In this paper, we will assume that the basis are not stochastic. Therefore, it can be obtained directly from the market at time $t=0$ i.e $s(t,t + u)= s(0,u)$ for $u \geq 0$.
	\end{itemize}

Consequently, the estimation forward curve $f_{E}(t, T)$ is given by
\begin{equation}\label{estimation_forward_rate_curve}
f_{E}(t,T) = f_{ois}(t,T) + s(t,T).
\end{equation}

Given the discount curve $P_{ois}(t,T)$ and using the representation \eqref{estimation_forward_rate_curve}, it is possible to find the discount curve for the estimation curve using the relation
\begin{equation}\label{bond_forward}
P_{E}(t,T)=H(t,T)P_{ois}(t,T)
\end{equation}
where $H(t,T)=\exp\left(-\int_{t}^{T}s(t,u) du \right)$.

\subsection{The model}
We will assume that the $f_{ois}$ dynamics follows a single factor Heath-Jarrow-Morton (HJM) model under the $\mathbb{Q}$-measure. Therefore, let $T>0$ a fixed time horizon, $t>0$ the starting time, and $W$ a Brownian motion defined on a complete probability space $(\Omega, \mathcal{F}, \mathbb{P})$. Then, the HJM model is defined by
\begin{align}\label{ois_forward_rate_curve}
df_{ois}(t,T) &= \sigma(t,T)\nu(t,T)dt + \sigma(t,T)dW^{\mathbb{Q}}_t
\end{align}
where $\nu(t,T)=\int_{t}^{T}\sigma(t,s)ds$ and $\sigma(t, T)$ are $\mathcal{F}_{t}$-adapted process that are positive functions for all $t,T$. In particular, we have that
\begin{eqnarray*}
f_{ois}(t,T)= -\partial_{T}\ln P_{ois}(t,T).
\end{eqnarray*}

To have a Markovian representation of the HJM, we will assume that the volatility is separable, i.e.
\begin{equation}\label{separation_condition}
\sigma(t,T)= h(t)g(T)
\end{equation}
with $g$ a positive time-dependent function and $h$ a non-negative process. This version of the HJM is also known as the Cheyette model, \cite{Cheyette}.\\

In particular, following \cite{AndreasenPiterbarg}, we will define
\begin{align*}
\eta_t &= g(t)h(t,x_t,y_t)  \nonumber \\
k_t &= - \frac{\partial_t g(t)}{g(t)}.
\end{align*}
Then, we have the following proposition.
\begin{proposition}
Consider the HJM model \eqref{ois_forward_rate_curve} with the separable volatility condition \eqref{separation_condition}. Define the stochastic processes $x(t)$ and $y(t)$ by
\begin{align}\label{short_rate_cheyette}
dx_t &= \left(-k(t) x(t) + y(t)\right)dt + \eta\left(t,x(t),y(t)\right) dW_t^{\mathbb{Q}} \nonumber \\
dy_t &= \left(\eta^{2}(t) - 2 k(t) y(t)\right) dt ,\nonumber \\
x(0) &=y(0)=0.
\end{align}
All zero-discount bonds are deterministic functions of the processes $x(t)$ and $y(t)$,
\begin{eqnarray*}
P_{ois}(t,T)=P(t, T, x(t), y(t)), 
\end{eqnarray*}
where 
\begin{eqnarray}\label{bond_ois}
P_{ois}(t,T,x,y) = \frac{P_{ois}(0,T)}{P_{ois}(0,t)} \exp\left(-G(t,T)x - \frac{1}{2} G^{2}(t,T)y \right),
\end{eqnarray}
where $G(t,T) = \int_{t}^{T} \exp\left(-\int_{t}^{u} k(s) ds \right) du$ and the short rate is
\begin{equation}
r_{ois}(t)=f_{ois}(t,t)= f_{ois}(0,t) + x(t).
\end{equation} 
\end{proposition}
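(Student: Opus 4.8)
The plan is to integrate the forward-rate dynamics \eqref{ois_forward_rate_curve}, exploit the separability \eqref{separation_condition} to factor out all dependence on the maturity $T$, and then read off $x(t)$ and $y(t)$ as the two state variables that survive. Writing $\phi(s,t) = \exp(-\int_s^t k(r)\d r)$, the relation $k = -\partial_t g/g$ gives $g(T)/g(s) = \phi(s,T)$, so that $\sigma(s,T) = \eta_s\,\phi(s,T)$ and $\nu(s,T) = \int_s^T \sigma(s,u)\d u = \eta_s\,G(s,T)$ with $G$ as in the statement. Integrating \eqref{ois_forward_rate_curve} from $0$ to $t$ then yields
\begin{equation*}
f_{ois}(t,T) = f_{ois}(0,T) + \int_0^t \eta_s^2\,\phi(s,T)G(s,T)\d s + \int_0^t \eta_s\,\phi(s,T)\d W_s^{\mathbb{Q}}.
\end{equation*}

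The key algebraic step is the pair of identities $\phi(s,T) = \phi(s,t)\phi(t,T)$ and $G(s,T) = G(s,t) + \phi(s,t)G(t,T)$, which let me pull the factors $\phi(t,T)$ and $G(t,T)$ — the only objects depending on $T$ — outside the $s$-integrals. This gives
\begin{equation*}
f_{ois}(t,T) = f_{ois}(0,T) + \phi(t,T)\,x(t) + \phi(t,T)G(t,T)\,y(t),
\end{equation*}
where I set $x(t) = \int_0^t \eta_s^2\phi(s,t)G(s,t)\d s + \int_0^t \eta_s\phi(s,t)\d W_s^{\mathbb{Q}}$ and $y(t) = \int_0^t \eta_s^2\phi(s,t)^2\d s$. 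Taking $T = t$ (so $\phi(t,t)=1$ and $G(t,t)=0$) immediately gives the short-rate identity $f_{ois}(t,t) = f_{ois}(0,t) + x(t)$, and clearly $x(0)=y(0)=0$.

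I would then recover the dynamics \eqref{short_rate_cheyette} by differentiating $x$ and $y$ in $t$. For $y$ this is a Leibniz differentiation of a Lebesgue integral with deterministic kernel: the boundary term contributes $\eta_t^2$ and $\partial_t(\phi(s,t)^2) = -2k(t)\phi(s,t)^2$ contributes $-2k(t)y(t)$, i.e.\ $dy_t = (\eta_t^2 - 2k(t)y(t))dt$. \textbf{The delicate step is $x$}, whose drift part carries a double $t$-dependence through both $\phi(s,t)$ and $G(s,t)$, and whose martingale part has a $t$-dependent integrand. I would treat the stochastic integral by writing it as $\e^{-K(t)}\int_0^t \eta_s\e^{K(s)}\d W_s^{\mathbb{Q}}$ with $K(t)=\int_0^t k(r)\d r$, which is a genuine It\^o integral times a smooth factor, and applying the product rule; this yields $\eta_t\,dW_t^{\mathbb{Q}}$ plus a $-k(t)(\cdots)dt$ term. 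For the drift part, Leibniz together with $\partial_t G(s,t)=\phi(s,t)$ and $\partial_t\phi(s,t)=-k(t)\phi(s,t)$ produces exactly another $-k(t)(\cdots)dt$ term and a $y(t)dt$ term. Collecting both pieces gives $dx_t = (-k(t)x(t)+y(t))dt + \eta_t\,dW_t^{\mathbb{Q}}$, matching \eqref{short_rate_cheyette}.

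Finally, for the bond price I would substitute the representation of $f_{ois}(t,u)$ into $P_{ois}(t,T)=\exp(-\int_t^T f_{ois}(t,u)\d u)$ and integrate termwise over $u\in[t,T]$. The term $\int_t^T f_{ois}(0,u)\d u$ reconstructs the deterministic ratio $P_{ois}(0,T)/P_{ois}(0,t)$; the linear term gives $\int_t^T \phi(t,u)\d u\,\,x(t) = G(t,T)\,x(t)$; and the quadratic term uses $\phi(t,u)G(t,u) = \tfrac{1}{2}\partial_u(G(t,u)^2)$, so that $\int_t^T \phi(t,u)G(t,u)\d u = \tfrac{1}{2}G(t,T)^2$. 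Exponentiating the negative of the resulting sum yields \eqref{bond_ois}. Apart from the careful bookkeeping in differentiating $x$, every step is a direct computation once the factorization isolating $\phi(t,T)$ and $G(t,T)$ is in place.
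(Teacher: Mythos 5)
Your proof is correct: the separability factorization $\sigma(s,T)=\eta_s\phi(s,T)$, $\nu(s,T)=\eta_s G(s,T)$, the transfer identities $\phi(s,T)=\phi(s,t)\phi(t,T)$ and $G(s,T)=G(s,t)+\phi(s,t)G(t,T)$, the Leibniz/It\^o bookkeeping giving $dx_t=(-k(t)x(t)+y(t))dt+\eta_t\,dW_t^{\mathbb{Q}}$ and $dy_t=(\eta_t^2-2k(t)y(t))dt$, and the closing integration $\int_t^T\phi(t,u)G(t,u)\,du=\tfrac{1}{2}G^{2}(t,T)$ all check out, and nothing breaks when $\eta$ is state-dependent because the $T$-dependence has been isolated entirely in the deterministic kernels $\phi$ and $G$. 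The paper itself supplies no proof of this proposition---it is imported from \cite{Cheyette} and \cite{AndreasenPiterbarg}---and your argument is precisely the standard derivation used in those references, so it coincides with the intended one.
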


The whole interest rate curve can be reduced to the evolution of the two-state variables $x(t)$ and $y(t)$. The variable $x(t)$ constitutes the main yield curve driver, whereas $y(t)$ is an auxiliary `convexity' variable. Note that the function $y(t)$ is not deterministic, however, it does not have a diffusion term. We call such processes locally deterministic. \\

We can see from \eqref{short_rate_cheyette} that
\begin{equation*}
x(t_a) = \int_{0}^{t_a} \exp\left(-\int_{u}^{t_a}k(w) dw\right) y(u) du + \int_{0}^{t_a}  \exp\left(-\int_{u}^{t_a}k(w) dw \right) \eta(u,x(u),y(u)) dW_u^{\mathbb{Q}}. 
\end{equation*}

In order to have a more manageable model, we will follow the ideas of \cite{AndreasenPiterbarg} where the state variables are approximated. Although it is possible to use other methodologies, for example \cite{Gatarek2019TowardsAG}. So, we can approximate $y(t)$ as
\begin{equation}\label{approximation_y_t}
\bar{y}(t):=\int_{0}^{t} \exp\left(-2\int_{u}^{t} k(w) dw \right) \eta^{2}(u,x(0),y(0)) du
\end{equation} 
and
\begin{eqnarray}
\bar{x}_{t_a}:= \bar{x}_0(t_a)&+&\int_{0}^{t_a} \exp\left(-\int_{u}^{t_a}k(w) dw\right) \bar{y}_u du \nonumber \\
&+& \int_{0}^{t_a}  \exp\left(-\int_{u}^{t_a}k(w) dw \right) \eta(u,\bar{x}(u),\bar{y}_u) dW_u^{\mathbb{Q}}   \label{approximation_x_t_a}
\end{eqnarray}
with initial condition $\bar{x}_{0}(t_a)$ will be chosen appropriately depending on the case. In the appendix \ref{estimation_error_l2} we can find the estimation for $\mathbb{E}\left[(x_t-\bar{x}_t)^{2}\right]$ and the approximation order.

\subsection{Model constraints}
To calculate the convergence order of the convexity adjustment approximation, we use the following hypotheses on $\eta(t,x,y)$.
\begin{hypothesis}\label{boundedness_volatility} 
The process $\eta_t$ is global Lipschitz and differentiable a.s. In addition, we will suppose that
\begin{align*}
\alpha_1 \leq \eta(t,x,y) \leq \alpha_2 \quad \forall (t,x,y) \in \mathbb{R}^{+} \times \mathbb{R} \times \mathbb{R}^{+} \quad \text{and} \quad \alpha_1, \alpha_2 > 0. \\
|\eta(t,x_2,y_2) - \eta(t,x_1,y_1)| \leq C_{x,y} \lVert (x_2-x_1,y_2-y_1)\rVert \quad \forall (t,x,y) \in \mathbb{R}^{+} \times \mathbb{R} \times \mathbb{R}^{+} 
\end{align*}
with $\lVert \cdot \rVert$ euclidean norm in $\mathbb{R}^{2}$.
\end{hypothesis}

The mean reversion function $k(\cdot)$ influences the range and flexibility of the volatility structure. The function is always positive and, in practice, it is usually low.

\begin{hypothesis}\label{boundedness_reversion} 
The mean reversion function $k(\cdot)$ is a continuous and positive a.s such that
\begin{equation*}
m_k < k(t) \leq M_k \quad \forall t \geq 0.
\end{equation*}

As a consequence.
\begin{remark}
Under these assumptions on $k(\cdot)$, we have that
\begin{equation*}
\lim_{t \to \infty}  I(\alpha,0,t):= \lim_{t \to \infty} \int_{0}^{t} \exp\left(-\alpha \int_{u}^{t} k(s) ds\right) du \leq \frac{1}{\alpha m_k} \quad \text{with} \quad \alpha > 0.
\end{equation*}
On other hand, 
\begin{equation*}
\lim_{t \to \infty}  J(\alpha,0,t):= \lim_{t \to \infty} \int_{0}^{t} G^{\alpha}(u,t) \exp\left(-\alpha \int_{u}^{t} k(s) ds\right) du \leq \frac{1}{\alpha m^{\alpha+1}_{k}}
\end{equation*}
\end{remark}
\end{hypothesis}

The hypotheses have been chosen for simplicity, but they can be replaced by suitable integrability conditions. We should also note that under the hypothesis \eqref{boundedness_volatility}, $\partial_x \eta(t,x,y)$ and $\partial_y \eta(t,x,y)$ are bounded.

\section{Convexity Adjustment}\label{sec:CA}
In this section, we derive the convexity adjustment for different products. The advantage of using the Malliavin calculus is that it allows us to derive a general representation formula for the convexity adjustment. Furthermore, as we will see later, it is possible to obtain closed formulas for the convexity adjustment when the volatility of the Cheyette model is time-dependent. 

We are going to introduce a general idea of the method. Let us define the process $Z_t = f(x_t)$. Suppose that $Z_t$ is a martingale under a measure $\mathbb{Q}_1$. However, we are interested in calculate $\mathbb{E}^{\mathbb{Q}_2}\left[Z_T \right]$ where $\mathbb{Q}_2$ is a measure under which $Z_t$ is not martingale and such that $dW^{\mathbb{Q}_1}_t = dW^{\mathbb{Q}_2}_t +\lambda_t dt$. Then, if we use the Clark-Ocone representation, we have that
\begin{equation*}
f(x_t) = \mathbb{E}^{\mathbb{Q}_1}\left[f(x_t)\right] + \int_{0}^{t} \mathbb{E}^{\mathbb{Q}_1}_s\left[ f^{\prime}(x_t) D_s x_t  \right] dW^{\mathbb{Q}_1}_s
\end{equation*}
Now, taking $\mathbb{E}^{\mathbb{Q}_2}\left( \cdot \right)$ in the previous expression and using Girsanov's theorem, we get that
\begin{equation}\label{general_convexity}
\mathbb{E}^{\mathbb{Q}_2}\left[ f(x_t) \right] = f(x_0) + \mathbb{E}^{\mathbb{Q}_2} \left[\int_{0}^{t}  \mathbb{E}^{\mathbb{Q}_1}_s\left[ f^{\prime}(x_t) D_s x_t  \right] \lambda_s ds \right]. 
\end{equation}
The second term is the convexity adjustment due to the change of measure from $\mathbb{Q}_1$ to $\mathbb{Q}_2$. The different choices of $f$, $\mathbb{Q}_1$, and $\mathbb{Q}_2$ will allow us to obtain a convexity adjustment approximation for the different cases of interest. 
\subsection{FRAs Vs futures}
The cash flows in FRAs and futures are computed under different measures. Consequently, we need to adjust the futures price quote to transform them into FRAs price quotes. On one hand, we define the forward rate at time $t_0$ between $t_1$ and $t_2$ under the forward curve $E$ as:
\begin{equation}\label{forward_rate}
L_{E}(t_0, t_1, t_2) = \frac{1}{\delta_{t_1,t_2}}\left(\frac{P_{E}(t_0,t_1)}{P_{E}(t_0,t_2)} - 1 \right)
\end{equation} 
where $P_{E}(t,T)$ is the discount factor for the curve $E$ from $t$ to $T$, and $\delta_{t_1,t_2}$ is the year fraction between $t_1$ and $t_2$ and, $t_{0}\leq t_{1} \leq t_{2}$. Note that $L_{E}(t_0, t_1, t_2)$ is a martingale under the forward measure $\mathbb{Q}^{t_2}$ associated with the numeraire $P_{ois}(t,t_2)$.\\

On the other hand, given $t\leq t_{0}$, let us define the future rate as:    
\begin{equation}\label{future}
\hat{L}_{E}(t,t_0, t_1, t_2) = \mathbb{E}_t^{\mathbb{Q}}\left[L_{E}(t_0, t_1, t_2) \right]
\end{equation}
where $\mathbb{Q}$ is the measure associated to the numeraire $B_t=\exp\left(\int_{0}^{t} r_{ois}(s) ds \right)$ with $ r_{ois}(t)$ the risk free short rate. Using
\eqref{forward_rate} and \eqref{future}, then the convexity adjustment definition is
\begin{equation*}
CA(t, t_0, t_1, t_2) = \hat{L}_{E}(t,t_0, t_1, t_2) - \mathbb{E}_t^{\mathbb{Q}^{t_2}}\left[L_{E}(t_0, t_1, t_2) \right].
\end{equation*}

In the following theorem, we specify the convexity adjustment for the futures.

\begin{theorem}\label{Th_CA_futures}[Convexity Adjustment approximation for Futures]
Given the Cheyette model in \eqref{short_rate_cheyette}, the hypotheses \ref{boundedness_volatility} and \ref{boundedness_reversion}, and considering the approximations in \eqref{approximation_y_t} and \eqref{approximation_x_t_a}. Then, the convexity adjustment approximation for futures is 
\begin{equation}\label{ca_approximation_futures}
CA(t,t_0,t_1, t_2) = \frac{P_{E}(0,t_1)}{\delta_{t_1,t_2} P_{E}(0,t_2)} \biggl(G(t_0,t_2)  - G(t_0,t_1) \biggr) \int_{0}^{t_0} \beta(s,t_0,\bar{x}_0(t_0),\bar{y}_s) \nu(s,t_2) ds + E(t_0)
\end{equation}
with 
\begin{equation*}
\beta(u,t_0,x,y) = \exp\left(-\int_{u}^{t_0}k(w) dw \right) \eta(u,x,y).
\end{equation*}
The error $E(t_0)$ is given by (\ref{f_x_y}) with $f(x,y)=\frac{1}{\delta_{t_0,t_1}}\left(\frac{P_{ois}(t_0,t_1,x,y)}{P_{ois}(t_0,t_2,x,y)} -1\right)$ and behaves as $\mathcal{O}(t_0)$ when $t_0 \to 0$ and $\|E(t_0)\|^{2}_{2} < \infty$ when $t_0 \to \infty$. 
\end{theorem}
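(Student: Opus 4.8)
The plan is to recognise the future-minus-forward spread as a direct instance of the master identity \eqref{general_convexity} and then to extract the stated main term by freezing the integrand at the initial time. First I would express the forward rate as a smooth function of the Cheyette state: combining \eqref{forward_rate}, \eqref{bond_forward} and the exponential-affine formula \eqref{bond_ois} gives $L_E(t_0,t_1,t_2)=f(x_{t_0},y_{t_0})$ with $f(x,y)=\frac{1}{\delta_{t_1,t_2}}\bigl(P_E(t_0,t_1,x,y)/P_E(t_0,t_2,x,y)-1\bigr)$, the deterministic basis factor $H$ passing through untouched. Because $L_E(\cdot,t_1,t_2)$ is a $\mathbb{Q}^{t_2}$-martingale, its time-zero value satisfies $f(x_0)=\mathbb{E}^{\mathbb{Q}^{t_2}}[f(x_{t_0})]$, and therefore the convexity adjustment is exactly $CA=\mathbb{E}^{\mathbb{Q}}[f(x_{t_0})]-f(x_0)$, i.e. the left-hand side of \eqref{general_convexity} minus $f(x_0)$.

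Next I would instantiate \eqref{general_convexity} with $\mathbb{Q}_1=\mathbb{Q}^{t_2}$ and $\mathbb{Q}_2=\mathbb{Q}$. The change of numeraire from the bank account $B_t$ to the bond $P_{ois}(\cdot,t_2)$ produces the HJM Girsanov kernel $dW_t^{\mathbb{Q}^{t_2}}=dW_t^{\mathbb{Q}}+\nu(t,t_2)\,dt$, so $\lambda_s=\nu(s,t_2)$, which is precisely the factor appearing in \eqref{ca_approximation_futures}. The identity then gives the exact representation $CA=\mathbb{E}^{\mathbb{Q}}\bigl[\int_0^{t_0}\mathbb{E}_s^{\mathbb{Q}^{t_2}}[\partial_x f(x_{t_0},y_{t_0})\,D_s x_{t_0}]\,\nu(s,t_2)\,ds\bigr]$. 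Differentiating the affine form of $f$ yields $\partial_x f=\frac{1}{\delta_{t_1,t_2}}\bigl(G(t_0,t_2)-G(t_0,t_1)\bigr)\,P_E(t_0,t_1,x,y)/P_E(t_0,t_2,x,y)$, while differentiating the state equation \eqref{short_rate_cheyette} (using that the approximation $\bar y$ of \eqref{approximation_y_t} carries no diffusion, so $D_s\bar y\equiv0$) shows that $D_s\bar x_{t_0}$ solves a linear equation whose leading solution is exactly $\beta(s,t_0,\bar x_s,\bar y_s)$, the remaining terms being driven by the bounded $\partial_x\eta$.

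The leading term is then obtained by freezing the two slowly varying factors. For $\partial_x f$, the ratio $P_E(t_0,t_1,x,y)/P_E(t_0,t_2,x,y)=1+\delta_{t_1,t_2}L_E(t_0,t_1,t_2)$ is itself a $\mathbb{Q}^{t_2}$-martingale, so replacing it by its $\mathbb{Q}^{t_2}$-mean, the forward value $P_E(0,t_1)/P_E(0,t_2)$, extracts the deterministic prefactor $\frac{1}{\delta_{t_1,t_2}}\bigl(G(t_0,t_2)-G(t_0,t_1)\bigr)\frac{P_E(0,t_1)}{P_E(0,t_2)}$ from both expectations. For $D_s x_{t_0}$, replacing the exact state by the approximations \eqref{approximation_x_t_a}–\eqref{approximation_y_t} and evaluating $\bar x$ at its deterministic initial value $\bar x_0(t_0)$ inside $\beta$ collapses $\int_0^{t_0}\mathbb{E}_s^{\mathbb{Q}^{t_2}}[D_s x_{t_0}]\,\nu(s,t_2)\,ds$ to $\int_0^{t_0}\beta(s,t_0,\bar x_0(t_0),\bar y_s)\,\nu(s,t_2)\,ds$. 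Together these reproduce exactly the main term of \eqref{ca_approximation_futures}, and all discarded contributions are collected into $E(t_0)$, whose closed form is \eqref{f_x_y}.

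The genuinely delicate step is the error analysis. I would split $E(t_0)$ into the freezing errors of $\partial_x f$ and of $\beta$, controlled by the boundedness and Lipschitz bounds on $\eta,\partial_x\eta,\partial_y\eta$ of Hypothesis \ref{boundedness_volatility}, and the model-approximation error $x-\bar x,\,y-\bar y$, controlled by the $L^2$ estimate of Appendix \ref{estimation_error_l2}. For $t_0\to0$ the inner and outer integrands are $O(1)$ on the shrinking interval $[0,t_0]$, giving $E(t_0)=\mathcal{O}(t_0)$; for $t_0\to\infty$ the uniform bound $\|E(t_0)\|_2^2<\infty$ rests on the mean-reversion bounds of Hypothesis \ref{boundedness_reversion}, in particular the decay encoded in the finite limits of $I(\alpha,0,t)$ and $J(\alpha,0,t)$, which keep the accumulated variance bounded. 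The main obstacle is controlling the coupled Malliavin system for $(D_s x,D_s y)$ of the unapproximated model and showing that its discrepancy from $\beta$, together with the frozen-coefficient errors and the conditional expectations, assembles into these two asymptotic regimes uniformly in the maturities.
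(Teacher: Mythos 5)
Your proposal follows essentially the same route as the paper's proof in Appendix \ref{Proof_CA_futures}: instantiate \eqref{general_convexity} with $\mathbb{Q}_1=\mathbb{Q}^{t_2}$, $\mathbb{Q}_2=\mathbb{Q}$ and $\lambda_s=\nu(s,t_2)$, differentiate the exponential-affine bond ratio to produce the factor $G(t_0,t_2)-G(t_0,t_1)$, approximate $\mathbb{E}_s\left[D_s\bar{x}_{t_0}\right]$ by $\beta(s,t_0,\bar{x}_0(t_0),\bar{y}_s)$ as in Lemma \ref{DsX}, freeze the bond ratio at its time-zero value, and collect the discarded terms into $E(t_0)$ via Lemma \ref{f_x_y} and the $L^2$ estimates of Appendix \ref{estimation_error_l2}. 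Your only departures are cosmetic but welcome: you state the measure assignment correctly (the paper's proof misprints it as ``$\mathbb{Q}_1=\mathbb{Q}$, $\mathbb{Q}_1=\mathbb{Q}^{t_2}$'' and swaps the inner and outer expectations in \eqref{ca_general_future}), and your observation that $P_{E}(t_0,t_1,x,y)/P_{E}(t_0,t_2,x,y)$ is itself a $\mathbb{Q}^{t_2}$-martingale gives a cleaner justification for the freezing step than the paper's direct substitution.
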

\begin{proof}
See appendix \ref{Proof_CA_futures}.
\end{proof}

\begin{example}[Convexity adjustment for futures under the Hull-White model]\label{example_ca_future}
The Cheyette model can be reduced to Hull-White model using the following parameters
\begin{align*}
g(T) &= \exp(-kT), \\
h(t) &= \exp(k t) \sigma.
\end{align*}
Moreover, from the definition of $g(\cdot)$ and $h(\cdot)$, we have that
\begin{align*}
\eta_s &= \sigma ,\\
\beta(s,u, x_0, \bar{y}_s) &= \sigma \exp(-k(u-s)),\\
\nu(s,t_2) &= \sigma \frac{1 - \exp(-k(t_2-s))}{k}.
\end{align*}
Then the convexity adjustment \eqref{ca_approximation_futures} is
\begin{align*}
CA(t_0,t_1, t_2) & \approx \frac{\sigma^{2} \exp(-k t_0)  P_{E}(0,t_1)}{\delta_{t_1,t_2} P_{E}(0,t_2)} \left(\frac{1 - \exp(- k t_0)}{k^{2}} - \frac{t_0 \exp(-k t_2)}{k} \right).   
\end{align*}

In the Figure \ref{fig:Futures}, we can check the accuracy of the last formula versus Monte Carlo simulation. The parameters used are $\sigma=0.015$, $k=0.003$, and flat curve with level $r=0.01$.

\begin{figure}[H]
	\begin{center}
		\includegraphics[scale=0.3]{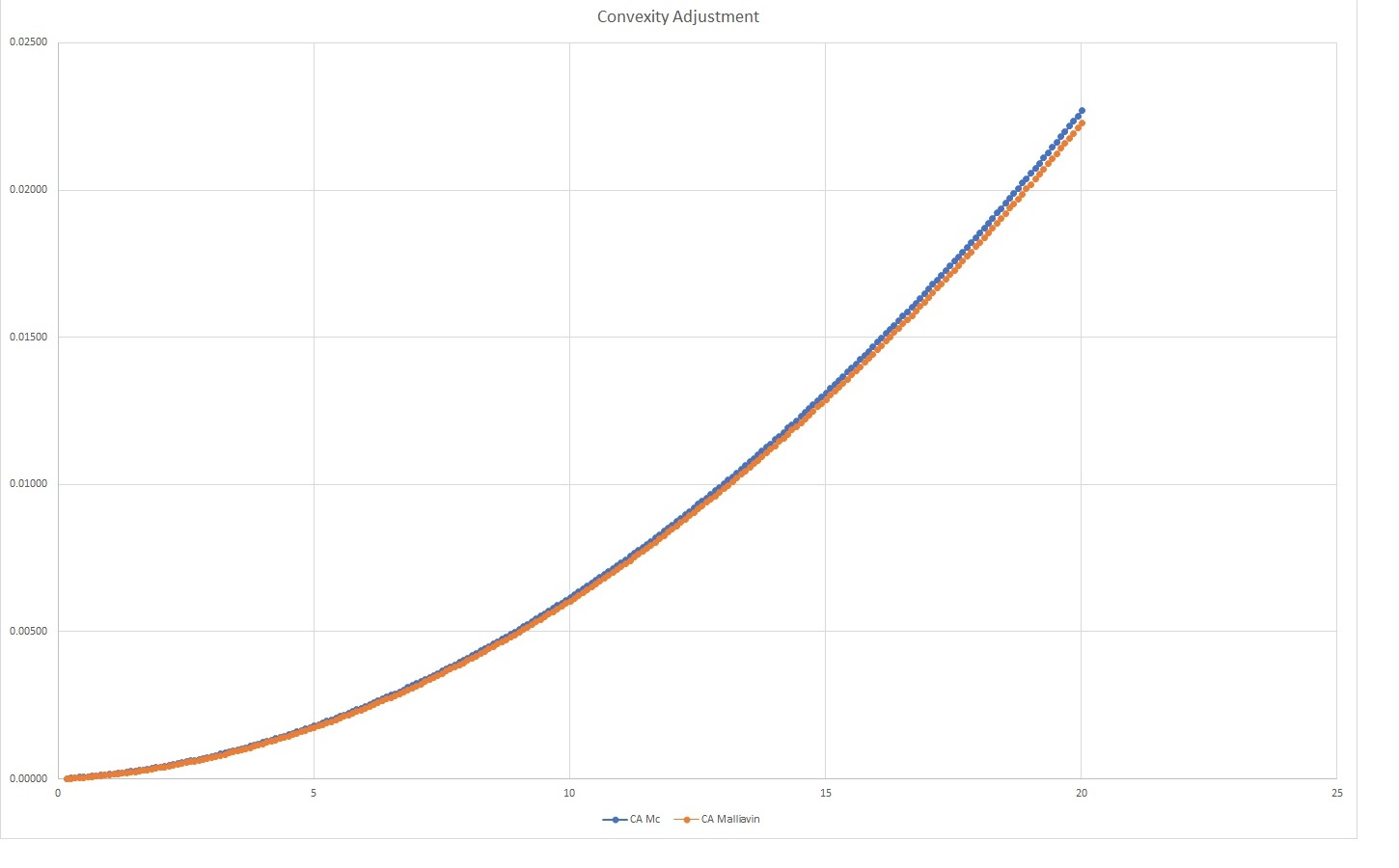}
	\end{center}
	\caption{Futures: Comparison Malliavin vs MC Simulation}
	\label{fig:Futures}
\end{figure} 
\end{example}

\subsection{OIS futures}
In this section, we will derive the convexity adjustment for short-term interest rate swaps, also known as STIRs. It is common in the market to find two versions of these futures, depending on how the fixings of the reference index are used. Given $t_0 < t_1$, we will define the overnight compounding rate as
\begin{equation*}
R(t_0,t_1) :=\frac{1}{\delta_{t_0,t_1}} \left(\exp\left(\int_{t_0}^{t_1}r_{ois}(u) du \right) - 1\right)
\end{equation*}
and the overnight average rate as
\begin{equation*}
R_{avg}(t_0,t_1) := \frac{1}{\delta_{t_0,t_1}}\int_{t_0}^{t_1}r_{ois}(u) du.
\end{equation*}
We observe that both, $R(\cdot,t_0,t_1)$ and $R_{avg}(\cdot,t_0,t_1)$  are not predictable and are only observable in $t_1$. However, 
$R(\cdot,t_0,t_1)$ and $R_{avg}(\cdot,t_0,t_1)$ are flows that will be payed in $t_1$. Therefore, we can consider that the expected value under the measure $\mathbb{Q}$ is observable during the entire period $[t_0, t_1]$. Let us define the next $\mathbb{Q}$ martingales:
\begin{align*} 
\bar{R}(t,t_0,t_1) &:= \mathbb{E}_t^{\mathbb{Q}}\left[ R(t_0,t_1)  \right], \\
\bar{R}_{avg}(t,t_0,t_1) &:= \mathbb{E}_t^{\mathbb{Q}}\left[ R_{avg}(t_0,t_1)  \right].
\end{align*}
Before continuing, we will do several observations. The first observation is that if we define $F(t,t_0,t_1) = \mathbb{E}^{\mathbb{Q}^{t_1}}\left[ R(t_0,t_1)\right]$, then we have that when $t \in [0,t_0]$
\begin{align*}
F(t,t_0,t_1)&= \frac{1}{P_{ois}(t,t_1)}  \mathbb{E}_{t}^{\mathbb{Q}}\left[\exp\left(-\int_{t}^{t_1} r_{ois}(u) du \right) R(t_0,t_1) \right]\\
&= \frac{1}{\delta_{t_0,t_1}}\left(\frac{P_{ois}(t,t_0)}{P_{ois}(t,t_1)} - 1\right), 
\intertext{and when $t \in  [t_0, t_1]$, we have}
F(t,t_0,t_1)&= \frac{1}{P_{ois}(t,t_1)} \mathbb{E}_{t}^{\mathbb{Q}}\left[\exp\left(-\int_{t}^{t_1} r_{ois}(u) du \right) R(t_0,t_1) \right]\\
 &= \frac{1}{\delta_{t_0,t_1}} \left(\frac{\exp\left(\int_{t}^{t_1}r_{ois}(u) du\right)}{P_{ois}(t_0,t)}-1\right).
\end{align*}
Then, the convexity adjustment for $ R(t_0,t_1)$ is
\begin{equation}\label{R_ois_ca}
CA_{ois}(t,t_0,t_1) = \bar{R}(t,t_0,t_1) - F(t,t_0,t_1).
\end{equation}
The second observation, is that we have the following equality
\begin{equation}\label{R_ois_avg}
\mathbb{E}^{\mathbb{Q}}\left[R_{avg}(t_0,t_1) \right] =\frac{1}{\delta_{t_0,t_1}} \mathbb{E}^{\mathbb{Q}}\left[ \log\left(1+ \delta_{t_0,t_1} R(t_0,t_1) \right) \right]  \end{equation}
To avoid complexity with the notation, we will define 
\begin{equation*}
I(t_0,t_1) := \int_{t_0}^{t_1} r_{ois}(s) ds.
\end{equation*}

\begin{theorem}\label{Th_CA_OIS}[Convexity Adjustment approximation for OIS Futures]
Given the Cheyette model in \eqref{short_rate_cheyette}, the hypotheses \ref{boundedness_volatility} and \ref{boundedness_reversion}, and considering the approximations in \eqref{approximation_y_t} and \eqref{approximation_x_t_a}. Then, the convexity adjustment approximation for OIS futures is 
\begin{equation}\label{convexity_ois_future}
\bar{R}(t_0,t_1) = \frac{1}{\delta_{t_0,t_1}}\left(\exp\left(\mathbb{E}_t^{\mathbb{Q}}\left[I(t_0,t_1)\right]\right)\exp\left(-\frac{1}{2}\int_{t}^{t_1}\Gamma^{2}(s,t_0,t_1) ds\right) - 1\right) + E(t_0),
\end{equation}
and for the average OIS future is
\begin{equation}\label{convexity_avg_ois_future}
\bar{R}_{avg}(t_0,t_1) = \frac{\mathbb{E}_t^{\mathbb{Q}}\left[I(t_0,t_1)\right] }{\delta_{t_0,t_1}} \approx \frac{1}{\delta_{t_0,t_1}}\left(\log\left(1+\delta_{t_0,t_1}  \mathbb{E}_t^{\mathbb{Q}}\left[R(t_0,t_1)\right] \right) - \frac{1}{2}\int_{0}^{t_1}  \Gamma^{2}(s,t_0,t_1) ds\right),
\end{equation}
where 
\begin{equation}
\Gamma(s,t_0,t_1)= \eta(s,x_0,y_0)\int_{\max(s, t_{0})}^{t_1} \exp\left( -\int_{s}^{u} k(w) dw\right)du.
\end{equation}
The error $E(t_0)$ is given by (\ref{f_x_y}) with $f(x,y)=\frac{1}{\delta_{t_0,t_1}}\left(\frac{1}{P_{ois}(t_0,t_1,x,y)} -1\right)$ and behaves as $\mathcal{O}(t_0)$ when $t_0 \to 0$ and $\|E(t_0)\|^{2}_{2} < \infty$ when $t_0 \to \infty$.
\end{theorem}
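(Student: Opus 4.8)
The plan is to reduce both identities to understanding the conditional law, under $\mathbb{Q}$, of the single random variable $I(t_0,t_1)=\int_{t_0}^{t_1} r_{ois}(u)\,\mathrm{d}u$, and then to run the Clark--Ocone template exactly as in the proof of Theorem~\ref{Th_CA_futures}. The two exact relations already recorded in the text do most of the bookkeeping: since $1+\delta_{t_0,t_1}R(t_0,t_1)=\exp\big(I(t_0,t_1)\big)$, taking $\mathbb{E}^{\mathbb{Q}}_t[\cdot]$ gives $1+\delta_{t_0,t_1}\bar R(t_0,t_1)=\mathbb{E}^{\mathbb{Q}}_t\big[\mathrm{e}^{I(t_0,t_1)}\big]$, while the equivalence \eqref{R_ois_avg} gives $\delta_{t_0,t_1}\bar R_{avg}(t_0,t_1)=\mathbb{E}^{\mathbb{Q}}_t[I(t_0,t_1)]$. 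Hence it suffices to produce a tractable approximation of the conditional Laplace transform $\mathbb{E}^{\mathbb{Q}}_t[\mathrm{e}^{I}]$ and of the conditional mean $\mathbb{E}^{\mathbb{Q}}_t[I]$, and then read off the two statements; in fact the two claims are inverse to one another through the $\log/\exp$ map, so a single Gaussian computation delivers both.

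Next I would linearise the dynamics using the approximations \eqref{approximation_y_t}--\eqref{approximation_x_t_a}, freezing the diffusion coefficient at $\eta(s,x_0,y_0)$ so that $I$ becomes a Gaussian (Wiener) functional. Writing $r_{ois}(u)=f_{ois}(0,u)+x(u)$ and inserting the explicit solution \eqref{approximation_x_t_a}, a stochastic Fubini argument turns the martingale part of $I$ into a single Wiener integral, whose integrand is exactly the process appearing in the Clark--Ocone representation, namely $\mathbb{E}^{\mathbb{Q}}_s[D_s I]=\eta(s,x_0,y_0)\int_{\max(s,t_0)}^{t_1}\exp\!\big(-\int_s^u k(w)\,\mathrm{d}w\big)\,\mathrm{d}u=\Gamma(s,t_0,t_1)$. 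The lower limit $\max(s,t_0)$ reflects that $x(u)$ only feels the noise up to time $u$ while the outer $\mathrm{d}u$ integral runs over $[t_0,t_1]$. Consequently $I$ is, to leading order, conditionally Gaussian given $\mathcal{F}_t$ with conditional variance $\int_t^{t_1}\Gamma^2(s,t_0,t_1)\,\mathrm{d}s$.

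The two formulas then drop out of the Gaussian moment identities. The conditional log-normal expectation gives $\mathbb{E}^{\mathbb{Q}}_t[\mathrm{e}^{I}]=\exp\!\big(\mathbb{E}^{\mathbb{Q}}_t[I]+\tfrac12\int_t^{t_1}\Gamma^2\,\mathrm{d}s\big)$; substituting this into $1+\delta_{t_0,t_1}\bar R=\mathbb{E}^{\mathbb{Q}}_t[\mathrm{e}^{I}]$ produces the exponential convexity correction of \eqref{convexity_ois_future}. Inverting the same relation yields $\mathbb{E}^{\mathbb{Q}}_t[I]=\log\mathbb{E}^{\mathbb{Q}}_t[\mathrm{e}^{I}]-\tfrac12\int\Gamma^2\,\mathrm{d}s=\log\!\big(1+\delta_{t_0,t_1}\mathbb{E}^{\mathbb{Q}}_t[R(t_0,t_1)]\big)-\tfrac12\int\Gamma^2\,\mathrm{d}s$, and dividing by $\delta_{t_0,t_1}$ gives the stated approximation for $\bar R_{avg}$, the first equality in \eqref{convexity_avg_ois_future} being just the exact identity \eqref{R_ois_avg}. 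I would package the exact residual in the form \eqref{general_convexity} with $f(x,y)=\tfrac{1}{\delta_{t_0,t_1}}\big(P_{ois}(t_0,t_1,x,y)^{-1}-1\big)$, so that $E(t_0)$ is literally the quantity (\ref{f_x_y}) already analysed in the futures case.

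The main obstacle is the error analysis, not the template. The residual $E(t_0)$ aggregates three sources: replacing $(x,y)$ by $(\bar x,\bar y)$, freezing $\eta$ at $(x_0,y_0)$, and discarding the third- and higher-order terms in the $\exp/\log$ expansions. The first is controlled by the $L^2$ estimate of $x_t-\bar x_t$ from Appendix~\ref{estimation_error_l2} together with the global Lipschitz property and two-sided bounds of Hypothesis~\ref{boundedness_volatility}; the second and third require Taylor-expanding the smooth map $f$ and bounding its derivatives uniformly in the state. The delicate point is the behaviour of $E(t_0)$ at the two ends: as $t_0\to0$ every $\mathrm{d}s$-integral and every factor $G(t_0,\cdot)$ vanishes linearly, forcing $E(t_0)=\mathcal{O}(t_0)$; as $t_0\to\infty$ one must show that $\lVert E(t_0)\rVert_2^2$ stays finite despite the growing integration window, which is precisely what the uniform bounds on $I(\alpha,0,t)$ and $J(\alpha,0,t)$ in the remark following Hypothesis~\ref{boundedness_reversion} are designed to deliver, the positive mean reversion $m_k>0$ making the relevant exponential kernels integrable. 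Checking that these bounds propagate through the nonlinear transforms with constants independent of $t_0$ is the crux of the argument.
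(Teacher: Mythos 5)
Your proposal follows essentially the same route as the paper's proof: Clark--Ocone applied to $I(t_0,t_1)$, the identification $\mathbb{E}_s^{\mathbb{Q}}\left[D_s I\right]\approx\Gamma(s,t_0,t_1)$ after freezing the volatility via \eqref{approximation_y_t}--\eqref{approximation_x_t_a}, the resulting conditional Gaussianity of $I$, the exact identity \eqref{R_ois_avg} to pass to $\bar R_{avg}$, and the error packaged through Lemma \ref{f_x_y}, exactly as in Appendix \ref{Proof_CA_OIS_futures}. One caveat: your (correct) Gaussian identity $\mathbb{E}_t^{\mathbb{Q}}\left[\mathrm{e}^{I}\right]=\exp\left(\mathbb{E}_t^{\mathbb{Q}}\left[I\right]+\tfrac12\int_t^{t_1}\Gamma^{2}\,ds\right)$ yields \eqref{convexity_ois_future} with $+\tfrac12\int\Gamma^{2}$ in the exponent, whereas the paper prints $-\tfrac12$ there while using the sign consistent with yours in \eqref{convexity_avg_ois_future}; since the two displayed formulas of the theorem are mutually inconsistent as printed and your derivation is the internally consistent one, this is a sign typo in the paper rather than a gap in your argument.
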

\begin{proof}
See appendix \ref{Proof_CA_OIS_futures}.
\end{proof}

\begin{remark}
We must note that (\ref{convexity_ois_future}) and (\ref{convexity_avg_ois_future}) are exact when $\eta(t,x_t, y_t)$ is only a time-dependent function, as in the case of the Hull-White model.
\end{remark}

\begin{remark}
We can calculate the convexity adjustment for the case $t_0 < t < t_1$ similarly to when $t < t_0$. For this, we will define
\begin{align*}
I(t,t_1)&:=\int_{t}^{t_1} r_{ois}(s) ds,\\
R(t_0,t_1) &:= \frac{1}{\delta_{t_0,t_1}}\left(\frac{\exp\left(\int_{t}^{t_1} r_{ois}(s) ds\right)}{P_{ois}(t_0,t)} - 1\right), \\
\intertext{and}
R_{avg}(t_0,t_1) &:= \frac{1}{\delta_{t_0,t_1}}\left(\int_{t_0}^{t} r_{ois}(s) ds + \int_{t}^{t_1} r_{ois}(s) ds\right).   
\end{align*}
\end{remark}

\begin{example}[Convexity adjustment for OIS futures under the Hull-White model]\label{example_convexity_hw_ois}
Similarly to the Example \ref{example_ca_future}, we can find the equivalent parameters for the Hull-White model: 
\begin{align*}
\Gamma(s,t_0,t_1) &= \frac{\sigma \exp(-ks)}{k}\biggl(\exp(-k(\max(s,t_0) - s)) - \exp(-k(t_1-s))\biggr),\\
\mathbb{E}^{\mathbb{Q}}\left[I(t_0,t_1)\right]&=-\log\left(\frac{P_{ois}(0,t_1)}{P_{ois}(0,t_0)}\right)\\
& \text{ }\text{ }\text{ }+ \frac{\sigma^{2}}{2k^{2}}\left(\delta_{t_0,t_1} - 2 \frac{\exp(-kt_0) - \exp(-kt_1)}{k} + \frac{\exp(-2kt_0) - \exp(-2kt_1)}{2k}  \right).
\end{align*}
Therefore, we have that
\begin{align*}
\frac{1}{2}\int_{0}^{t_1} \Gamma^{2}(s,t_0,t_1) ds &= \frac{\sigma^{2}t_0}{2k^{2}} \biggl( \exp(-kt_0) + \exp(-2kt_1) - 2 \exp(-k(t_1+t_0)) \biggr)\\  
&+ \frac{\sigma^{2}}{2k^{2}} \biggl(\frac{\exp(-2kt_0) - \exp(-2kt_1)}{2k}  + \exp(-kt_0)t_0 \\
&\quad - 2 \frac{\exp(-2kt_0) - \exp(-k(t_0 + t_1))}{k}  \biggr).
\end{align*}
Then, if we substitute the last equalities in (\ref{convexity_ois_future}), we get an approximation for OIS future at $t=0$.\\

The following figures show the accuracy of (\ref{convexity_ois_future}) and (\ref{convexity_avg_ois_future}). The parameters used to run the Monte Carlo have been $k=0.003$, $\sigma=0.01$, and flat curve $r=0.01$.

\begin{figure}[H]
	\begin{center}
		\includegraphics[scale=0.3]{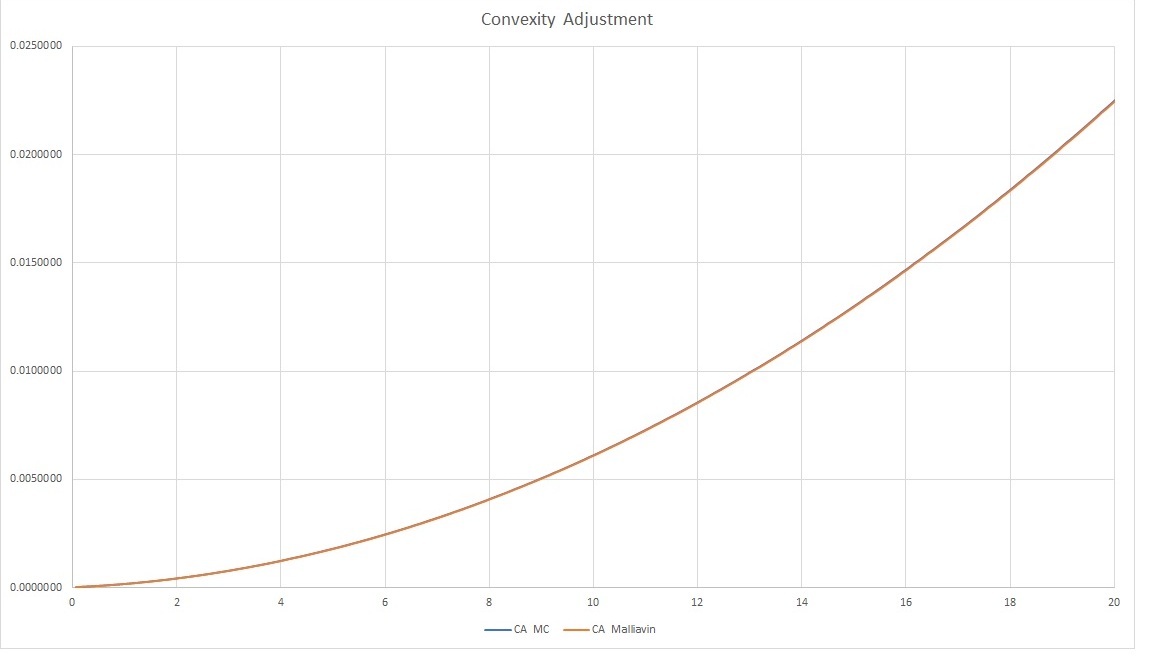}
	\end{center}
	\caption{Compounding OIS Future: Comparison Malliavin vs MC Simulation}
	\label{fig:Futures}
\end{figure}

\begin{figure}[H]
	\begin{center}
		\includegraphics[scale=0.3]{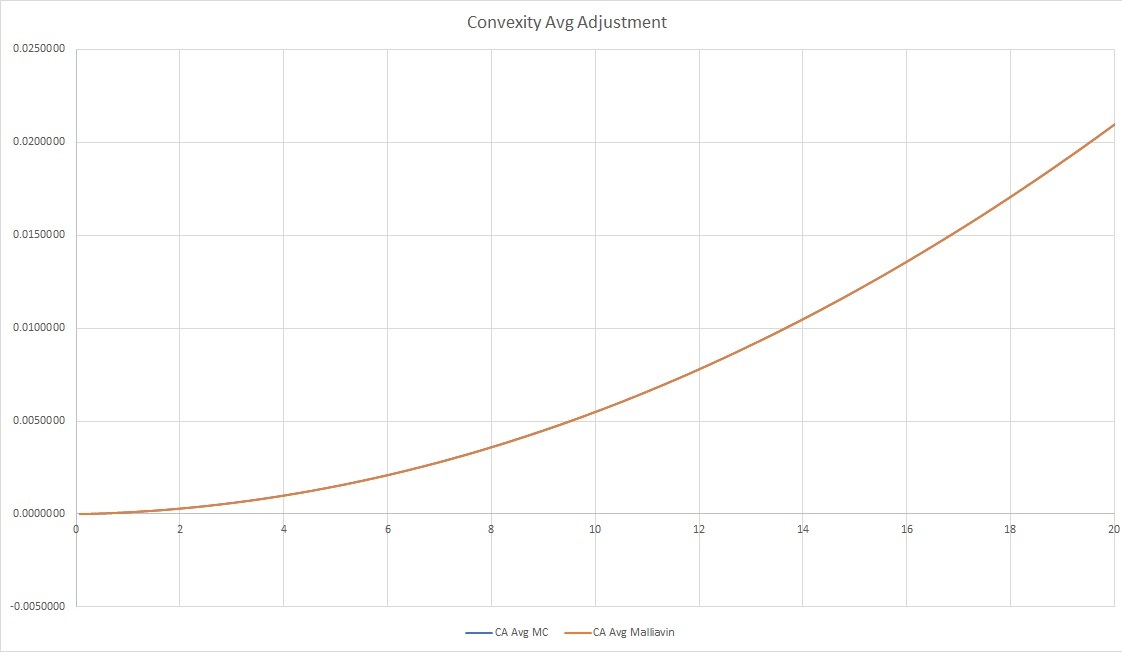}
	\end{center}
	\caption{Average OIS Future: Comparison Malliavin vs MC Simulation}
	\label{fig:Futures}
\end{figure} 
\end{example}

\subsection{FRAs in arrears}
A FRA in arrears is the most classic example among convexity adjustment products. The price is given by
\begin{equation}\label{FRAinArrear}
P_{E}(0,t_1)\mathbb{E}^{\mathbb{Q}^{t_1}}\left[L_{E}(t_1,t_1,t_2)\right],
\end{equation}
i.e. the cash flow associated with a FRA in arrears is $L_{E}(t_1,t_1,t_2)$ in $t_1$.

\begin{theorem}\label{Th_CA_FRAsinArrears}[Convexity Adjustment approximation for FRAs in Arrears]
Given the Cheyette model in \eqref{short_rate_cheyette}, the hypotheses \ref{boundedness_volatility} and \ref{boundedness_reversion}, and considering the approximations in \eqref{approximation_y_t} and \eqref{approximation_x_t_a}. Then, the convexity adjustment approximation for FRAs in Arrears is 
\begin{align}\label{ca_approximations_fra_arrears}
CA(t_0,t_1) =   &\frac{G(t_1,t_2)}{\delta_{t_1,t_2}P_{E}(0,t_1,t_2)}  \nonumber \\
&\cdot \int_{0}^{t_1} \beta(s,t_1, \bar{x}_0(t_1), \bar{y}_s) \overline{DM}(s,t_1) \Bigl(\bar{\nu}(s,t_2,\bar{x}_{0}(t_1))-\bar{\nu}(s,t_1,\bar{x}_{0}(t_1))\Bigr) ds  + E(t_1).
\end{align}
The error $E(t_1)$ is given by (\ref{f_x_y}) with $f(x,y)=\frac{1}{\delta_{t_1,t_2}}\left(\frac{1}{P_{E}(t_1,t_2,x,y)} - 1 \right)$ and behaves as $\mathcal{O}(t_1)$ when $t_1 \to 0$ and $\|E(t_1)\|^{2}_{2} < \infty$ when $t_1 \to \infty$. 
\end{theorem}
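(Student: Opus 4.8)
The plan is to specialize the general template \eqref{general_convexity} to the in-arrears payoff. Since the forward rate $L_E(t,t_1,t_2)$ is a martingale under the forward measure $\mathbb{Q}^{t_2}$, whereas the price \eqref{FRAinArrear} requires its expectation under the $t_1$-forward measure $\mathbb{Q}^{t_1}$, I would set $\mathbb{Q}_1=\mathbb{Q}^{t_2}$, $\mathbb{Q}_2=\mathbb{Q}^{t_1}$ and take $f(x,y)=\frac{1}{\delta_{t_1,t_2}}\left(\frac{1}{P_E(t_1,t_2,x,y)}-1\right)$, so that $f(x_{t_1},y_{t_1})=L_E(t_1,t_1,t_2)$ because $P_E(t_1,t_1)=1$. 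The martingale property gives $\mathbb{E}^{\mathbb{Q}^{t_2}}[L_E(t_1,t_1,t_2)]=L_E(0,t_1,t_2)$, so the convexity adjustment is exactly the correction term produced by the change of measure. Since $y$ is locally deterministic, $D_s\bar y_{t_1}=0$ and only the $x$-derivative survives, so \eqref{general_convexity} becomes
\[
CA(t_0,t_1)=\mathbb{E}^{\mathbb{Q}^{t_1}}\!\left[\int_{0}^{t_1}\mathbb{E}^{\mathbb{Q}^{t_2}}_s\!\left[\partial_x f(\bar x_{t_1},\bar y_{t_1})\,D_s\bar x_{t_1}\right]\lambda_s\,ds\right]+E(t_1),
\]
with $E(t_1)$ collecting the error from replacing the exact state by the approximations \eqref{approximation_y_t}--\eqref{approximation_x_t_a}.

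Next I would compute the three ingredients. First, the Girsanov drift: from \eqref{bond_ois} the OIS bond has log-diffusion $-G(t,T)\eta_t$, hence $dW^{\mathbb{Q}^{T}}_t=dW^{\mathbb{Q}}_t+\nu(t,T)\,dt$ with $\nu(t,T)=G(t,T)\eta_t$; differencing the two forward-measure transformations yields $\lambda_s=\nu(s,t_2)-\nu(s,t_1)$, which after freezing the state becomes the factor $\bar\nu(s,t_2,\bar x_0(t_1))-\bar\nu(s,t_1,\bar x_0(t_1))$ in \eqref{ca_approximations_fra_arrears}. Second, differentiating $f$ through \eqref{bond_forward}--\eqref{bond_ois} (with $H$ deterministic) gives $\partial_x f(x,y)=\frac{G(t_1,t_2)}{\delta_{t_1,t_2}}\frac{1}{P_E(t_1,t_2,x,y)}$, the source of the prefactor $\frac{G(t_1,t_2)}{\delta_{t_1,t_2}P_E(0,t_1,t_2)}$. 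Third, differentiating the stochastic integral in \eqref{approximation_x_t_a} and retaining the leading term gives $D_s\bar x_{t_1}=\beta(s,t_1,\bar x(s),\bar y_s)$, with $\beta$ as in Theorem \ref{Th_CA_futures}; the contributions from $\partial_x\eta,\partial_y\eta$ (bounded under Hypothesis \ref{boundedness_volatility}) are pushed into $E(t_1)$.

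With these in hand I would evaluate the inner conditional expectation. Because $P_{ois}(t,t_1)/P_{ois}(t,t_2)$ is a $\mathbb{Q}^{t_2}$-martingale equal to $1/P_{ois}(t_1,t_2)$ at time $t_1$, the quantity $\mathbb{E}^{\mathbb{Q}^{t_2}}_s[\,1/P_E(t_1,t_2,\bar x_{t_1},\bar y_{t_1})\,]$ reduces, to leading order, to the normalized discount martingale $\overline{DM}(s,t_1)$. Combining this with the three ingredients and freezing the slow driver $\bar x$ at the deterministic level $\bar x_0(t_1)$, the integrand becomes essentially deterministic, the outer $\mathbb{Q}^{t_1}$-expectation is trivial, and the expression collapses to the single time integral displayed in \eqref{ca_approximations_fra_arrears}.

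The main obstacle is the error analysis. I would split $E(t_1)$ into (a) the state-approximation error from \eqref{approximation_x_t_a}, controlled in $L^2$ by the appendix estimate for $\mathbb{E}[(x_t-\bar x_t)^2]$; (b) the linearization error from freezing $\eta,\partial_x\eta,\partial_y\eta$ at $\bar x_0(t_1)$, controlled by the global-Lipschitz bound of Hypothesis \ref{boundedness_volatility}; and (c) the error from using the leading-order Malliavin derivative. Each piece carries a factor $G(t_1,t_2)$ together with a time integral over $[0,t_1]$, which forces the $\mathcal{O}(t_1)$ behaviour as $t_1\to 0$ since $G$ and $\nu$ stay bounded there; for $t_1\to\infty$, boundedness $\alpha_1\le\eta\le\alpha_2$ and the decay estimates on $I(\alpha,0,t)$ and $J(\alpha,0,t)$ from the Remark under Hypothesis \ref{boundedness_reversion} keep $\|E(t_1)\|_2^2<\infty$. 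The delicate point is to freeze $\bar x$ and truncate the Malliavin derivative to a \emph{consistent} order, so that the neglected terms are genuinely $\mathcal{O}(t_1)$ and do not degrade the stated rate.
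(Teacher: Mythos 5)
Your setup coincides with the paper's: Clark--Ocone applied to $L_E(t_1,t_1,t_2)$ under its natural measure $\mathbb{Q}^{t_2}$, Girsanov with $\lambda_s=\nu(s,t_2)-\nu(s,t_1)$ to pass to $\mathbb{Q}^{t_1}$, and the chain rule giving $D_s L_E(t_1,t_1,t_2)=\frac{G(t_1,t_2)}{\delta_{t_1,t_2}P_E(t_1,t_2)}D_s x_{t_1}$. Up to that point you are reproducing the paper's argument. The genuine gap is in your treatment of the inner conditional expectation, i.e.\ of $\mathbb{E}^{t_2}_s\bigl[D_s\bar{x}_{t_1}\bigr]$, which is precisely where the factor $\overline{DM}(s,t_1)$ in \eqref{ca_approximations_fra_arrears} comes from. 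In the paper (Lemma \ref{DsX}), under the forward measure $\mathbb{Q}^{t_2}$ the process $\bar{x}$ acquires the state-dependent drift $-\exp\bigl(-\int_u^{t_1}k\bigr)\bar{\nu}(u,t_2)\eta(u,\bar{x}_u,\bar{y}_u)$, so $D_s\bar{x}_{t_1}$ solves a linear equation whose conditional expectation is $\beta(s,t_1,\cdot)\,\overline{DM}(s,t_1)$, with $\overline{DM}$ the expectation of an exponential of $-\int_s^{t_1}\exp\bigl(-\int_u^{t_1}k\bigr)\bigl(\partial_x\eta\,\bar{\nu}+\eta\,\partial_x\bar{\nu}\bigr)du$. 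This is a leading-order multiplicative correction whenever $\eta$ depends on the state; it reduces to $1$ only in the state-independent (Hull--White) case. You instead truncate $D_s\bar{x}_{t_1}$ to $\beta$ and push the $\partial_x\eta$, $\partial_y\eta$ contributions into $E(t_1)$ --- but that is the spot-measure approximation $\mathbb{E}^{\mathbb{Q}}_s[D_s\bar{x}_{t_1}]\approx\beta$, not the forward-measure one the proof needs, and it is also inconsistent with the theorem itself, where $E(t_1)$ is by definition the state-approximation error of Lemma \ref{f_x_y}, not a dumping ground for the dropped drift terms.

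Your attempt to then recover $\overline{DM}(s,t_1)$ from $\mathbb{E}^{\mathbb{Q}^{t_2}}_s\bigl[1/P_E(t_1,t_2,\bar{x}_{t_1},\bar{y}_{t_1})\bigr]$ misidentifies the object. By the very martingale property you invoke, that conditional expectation equals $\frac{1}{H(t_1,t_2)}\frac{P_{ois}(s,t_1)}{P_{ois}(s,t_2)}$, a forward bond ratio; in the paper this factor is simply frozen at its time-zero value and becomes the prefactor $P_E(0,t_1,t_2)$ in the denominator. It is not $\overline{DM}$: under Hull--White the paper's $\overline{DM}\equiv 1$ while the (normalized) bond ratio is a nontrivial stochastic process, so the two objects differ already in the simplest case. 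Moreover, since you have already used $1/P_E$ (frozen) to produce the prefactor via $\partial_x f$, reusing it as the source of $\overline{DM}$ double counts that factor. As written, your derivation can only yield \eqref{ca_approximations_fra_arrears} with $\overline{DM}\equiv 1$, which is correct for Hull--White but not the theorem's statement for the general Cheyette model; the missing ingredient is exactly Lemma \ref{DsX}, the computation of the Malliavin derivative's conditional expectation under the forward measure.
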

\begin{proof}
See appendix \ref{Proof_CA_FRAsinArrears}.
\end{proof}

\begin{example}[Convexity adjustment for FRAs in Arrears under the Hull-White model]\label{example_convexity_hw_FRAsinArrears}
The model can be restricted to a Hull-White model with constant parameters. The analytical approximation obtained from \eqref{ca_approximations_fra_arrears} is
\begin{equation*}
CA(t_0,t_1) \approx \frac{G(t_1,t_2)}{\delta_{t_1,t_2}P_{E}(0,t_1,t_2)}  \frac{\sigma^{2}}{k} \int_{0}^{t_1} \Bigl(\exp(- k(t_1 - u)) -   \exp(- k(t_2 - u))\Bigr) \exp(- k(t_1 - u)) du 
\end{equation*}
In Figure \ref{fig:FRA_HW}, we compare the approximation with a Monte Carlo method when the parameters are $\sigma=0.1$ and $k=0.007$.
\begin{figure}[H]
		\begin{center}
		\includegraphics[scale=0.3]{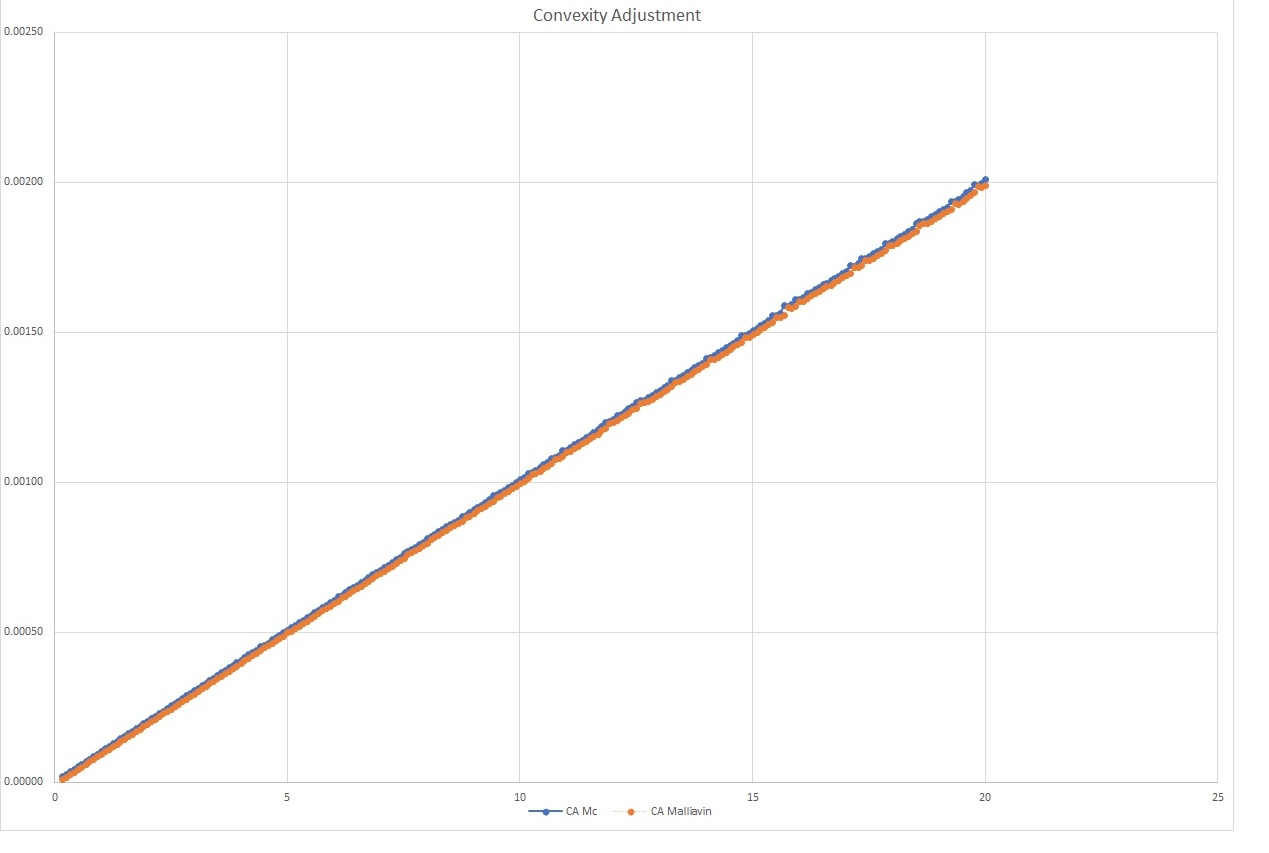}
		\end{center}
		\caption{FRA in Arrears: Comparison Malliavin vs MC Simulation}
		\label{fig:FRA_HW}
\end{figure}
\end{example}

\subsection{CMSs}
The last product we will approximate the convexity adjustment are CMS. We will introduce some notation that we will use throughout the section. We define the swap rate from $t_a$ to $T_b$ at time $t$ as
\begin{equation*}
S_{a,b}(t) := \frac{\sum_{i=1}^{n_E}\delta_{t^{E}_{i-1}, t^{E}_i} L^{E}(t,t^{E}_{i-1}, t^{E}_{i}) P_{ois}(t,t^{E}_{i})}{01(t,t_a,T_b)}
\end{equation*}
where
\begin{align*}
01(t,t_a,t_b) = \sum_{j=1}^{n_f} \delta_{t^{f}_{i-1}, t^{f}_i} P_{ois}(t,t^{f}_{j}) \\
t_a=t^{E}_0 < t^{E}_i< \cdots < t^{E}_{n_E}=t_b \quad i=0,\cdots,n_E&  \\
t_a=t^{f}_0 < t^{f}_j< \cdots < t^{f}_{n_f}=t_b \quad j=0,\cdots,n_f&
\end{align*}
The same way, we will define the OIS swap rate as
\begin{equation*}
S^{ois}_{a,b}(t) = \frac{P_{ois}(t,T^{E}_a) - P_{ois}(t,T^{E}_b)}{01(t,t_a,t_b)}. 
\end{equation*}

\begin{remark}
Note from \eqref{bond_forward} that
\begin{equation*}
S_{a,b}(t) = S^{ois}_{a,b}(t) + \frac{\sum_{i=1}^{n_E}\delta_{t^{E}_{i-1}, t^{E}_i} \alpha(t,t^{E}_{i-1}, t^{E}_{i}) P_{ois}(t,t^{E}_{i})} {01(t,t_a,t_b)}
\end{equation*}
where 
\begin{equation*}
\alpha(t,t^{E}_{i-1}, t^{E}_{i})  = \frac{1}{\delta_{t^{E}_{i-1}, t^{E}_i}}\left(\frac{H(t,t^{E}_{i-1})}{H(t,t^{E}_{i})} - 1\right).
\end{equation*}

We will suppose that variability of spread term structure $\alpha(t,t^{E}_{i-1}, t^{E}_{i})$ is low. Therefore, it is reasonable to freeze it at time $t=0$. Then, we have that
\begin{equation}\label{approximation_basis_swap}
S_{a,b}(t) \approx S^{ois}_{a,b}(t) + \frac{\sum_{i=1}^{n_E}\delta_{t^{E}_{i-1}, t^{E}_i} \alpha(0,t^{E}_{i-1}, t^{E}_{i}) P_{ois}(0,t^{E}_{i})} {01(0,t_a,t_b)}.
\end{equation}
\end{remark}

\begin{theorem}\label{Th_CA_CMS}[Convexity Adjustment approximation for CMS]
Given the Cheyette model in \eqref{short_rate_cheyette}, the hypotheses \ref{boundedness_volatility} and \ref{boundedness_reversion}, and considering the approximations in \eqref{approximation_y_t} and \eqref{approximation_x_t_a}, and
\begin{equation}
M(t,t_p)= \frac{P_{ois}(t,t_p)}{01(t,t_a,t_p)}.
\end{equation}
Then, we have the temporal convexity adjustment for a CMS rate is approximated by
\begin{align} \label{cms_first_order_convexity}
CA(t_p) &\approx  \frac{\partial_x S_{a,b}(t_a,\bar{x}_0(t_a), \bar{y}_{t_a}) \partial_x M(t_a,t_p,\bar{x}_0(t_a), \bar{y}_{t_a})}{M(0,t_p)} \int_{0}^{t_a}  \beta^2(s,t_a,\bar{x}_0(t_a),\bar{y}_s) \\
&\cdot  \exp\left(-2\int_{s}^{t_a}\partial_x (\beta(u,t_a,\bar{x}_u,\bar{y}_u) \mu(u,\bar{x}_u, \bar{y}_u,t_a,t_b))|_{\bar{x}_u=\bar{x}_{0}(t_a)}  du \right)ds + E(t_a)\nonumber 
\end{align}
with  $\beta(u,t_a,x,y) = \exp\left(-\int_{u}^{t_a}k_w dw\right)\eta(u,x,y)$ and the error $E(t_a)$ is given by (\ref{f_x_y}) with $f(x,y)=M(t_a,t_p,x,y)S_{a,b}(t_a,x,y)$ and behaves as $\mathcal{O}(t_a)$ when $t_a \to 0$ and $\|E(t_a)\|^{2}_{2} < \infty$ when $t_a \to \infty$.
\end{theorem}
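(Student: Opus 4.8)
The plan is to specialise the general identity \eqref{general_convexity} to the CMS payoff, the only new ingredient being the change of measure that makes $M(t,t_p)$ appear. The convexity adjustment is $CA(t_p)=\mathbb{E}^{\mathbb{Q}^{t_p}}[S_{a,b}(t_a)]-S_{a,b}(0)$, where $\mathbb{Q}^{t_p}$ is the payment ($t_p$-forward) measure and $S_{a,b}(0)$ is the value under the swap/annuity measure $\mathbb{Q}_1$, under which $S_{a,b}$ is a martingale. First I would pass from $\mathbb{Q}^{t_p}$ to $\mathbb{Q}_1$: since $M(t,t_p)=P_{ois}(t,t_p)/01(t,t_a,t_p)$ is the ratio of the tradeable $P_{ois}(\cdot,t_p)$ to the annuity numeraire, the Radon–Nikodym density between the two measures is governed by $M(t,t_p)/M(0,t_p)$, so that $\mathbb{E}^{\mathbb{Q}^{t_p}}[S_{a,b}(t_a)]=M(0,t_p)^{-1}\,\mathbb{E}^{\mathbb{Q}_1}[M(t_a,t_p)S_{a,b}(t_a)]$. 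This is exactly why the error in \eqref{f_x_y} is stated for $f(x,y)=M(t_a,t_p,x,y)\,S_{a,b}(t_a,x,y)$. I would also invoke the basis–freezing \eqref{approximation_basis_swap} so that both $S_{a,b}$ and $M$ depend on the noise only through the OIS state variables $(x,y)$.

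Next I would apply \eqref{general_convexity}, equivalently a first–order covariance expansion. Because $S_{a,b}$ is a $\mathbb{Q}_1$-martingale, $\mathbb{E}^{\mathbb{Q}_1}[S_{a,b}(t_a)]=S_{a,b}(0)$ and the own–drift of the swap rate drops out at first order, so the leading adjustment is the Malliavin covariance between $S_{a,b}(t_a)$ and $M(t_a,t_p)$, both viewed as functions of $x_{t_a}$. Pairing the Malliavin derivatives $\partial_x S_{a,b}\,D_s x_{t_a}$ and $\partial_x M\,D_s x_{t_a}$, the cross term produces the factor $\partial_x S_{a,b}\,\partial_x M$, while the purely diagonal second–order pieces are either annihilated by the martingale property of $M$ under $\mathbb{Q}_1$ or relegated to $E(t_a)$. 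Freezing the slowly–varying factors $\partial_x S_{a,b}$, $\partial_x M$ and $1/M$ at the initial point $(\bar{x}_0(t_a),\bar{y}_0)$ pulls the prefactor $\frac{\partial_x S_{a,b}\,\partial_x M}{M(0,t_p)}$ out of the integral, leaving the accumulated variance $\int_0^{t_a}(D_s\bar{x}_{t_a})^2\,ds$.

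The variance factor is then identified explicitly. Differentiating the approximation \eqref{approximation_x_t_a} in the Malliavin sense yields a linear first–variation equation whose solution is $D_s\bar{x}_{t_a}=\beta(s,t_a,\bar{x}_s,\bar{y}_s)\exp\!\bigl(-\int_s^{t_a}\partial_x(\beta\mu)\,du\bigr)$, with $\mu(\cdot,t_a,t_b)$ the drift of $\bar{x}$ under $\mathbb{Q}_1$ (its dependence on $t_b$ is precisely the numeraire/Girsanov correction of the swap measure). Squaring — since the variance carries $(D_s\bar{x}_{t_a})^2$ — reproduces the factor $\beta^2(s,t_a,\cdot)$ and the exponent $-2\int_s^{t_a}\partial_x(\beta\mu)\,du$ of \eqref{cms_first_order_convexity}, after freezing $\bar{x}_u=\bar{x}_0(t_a)$ inside the exponential. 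Collecting the prefactor and this integrand gives the stated formula.

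The main obstacle is controlling the error $E(t_a)$, which aggregates four sources: the higher–order Taylor remainder of $f=MS$ (the terms beyond the leading cross term, including any residual drift of $S_{a,b}$ under the intermediate measure), the freezing of the smooth coefficients, the freezing of the exponent, and the replacement of $(x,y)$ by the approximations \eqref{approximation_y_t}–\eqref{approximation_x_t_a}. I would bound each term using Hypotheses \ref{boundedness_volatility} and \ref{boundedness_reversion}: the two–sided bound and global Lipschitz property of $\eta$ (hence bounded $\partial_x\eta,\partial_y\eta$) control the first–variation process and the frozen coefficients, while the lower bound $m_k$ on the reversion provides the finite limits of the integrals $I(\cdot)$ and $J(\cdot)$ from the remark following Hypothesis \ref{boundedness_reversion}. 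Combined with the appendix $L^2$–estimate for $\mathbb{E}[(x_t-\bar{x}_t)^2]$, this delivers $E(t_a)=\mathcal{O}(t_a)$ as $t_a\to 0$ and $\|E(t_a)\|^{2}_{2}<\infty$ as $t_a\to\infty$, exactly as claimed.
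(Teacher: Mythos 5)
Your proposal is correct and follows essentially the same route as the paper's proof: the change of measure $\mathbb{E}^{t_p}[S_{a,b}(t_a)]=M(0,t_p)^{-1}\mathbb{E}^{01}[M(t_a,t_p)S_{a,b}(t_a)]$ with frozen basis, the Clark--Ocone/duality pairing of $\partial_x S_{a,b}\,D_s\bar{x}_{t_a}$ with $\partial_x M\,D_s\bar{x}_{t_a}$ under the annuity measure, the appendix approximation (\ref{approximation_E_01_Ds_x_t}) for the first variation (correctly yielding the factor $2$ in the exponent after squaring), and the error control via Lemma (\ref{f_x_y}) and the $L^2$ estimate. The only detail you leave implicit, which the paper makes explicit, is that $\bar{x}_0(t_a)$ is calibrated by the condition $S_{a,b}(t_a,\bar{x}_0(t_a),\bar{y}_{t_a})=S_{a,b}(0)$, which plays the role of the mapping-function assumption for $\mathbb{E}^{01}\bigl[M(t_a,t_p)\,|\,S_{a,b}(t_a)\bigr]$.
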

\begin{proof}
See appendix \ref{Proof_CA_CMS}.
\end{proof}

\begin{remark}
The key point is to calculate an approximation of $\mathbb{E}_s^{0,1}\left[ D_s \bar{x}_{t_a}\right]$. The simplest cases are on the Hull-White or Ho-Lee model. The general case is treated in (\ref{approximation_under_annuity_measeure_d_s}). 
\end{remark}

\begin{example}[Convexity adjustment for CMS under the Hull-White model]
To check the accuracy of the last approximation, we compute with a Monte Carlo simulation the exact value of $\mathbb{E}^{t_p}\left[S^{ois}_{a,b}(t_a)\right]$ under spot measure $\mathbb{Q}$, i.e we will compute  $\frac{1}{P_{ois}(0,t_a)}  \mathbb{E}^{\mathbb{Q}}\left[\frac{S_{a,b}(t_a) P_{ois}(t_a,t_p)}{\beta_{t_a}} \right]$. For the Hull-White model case, we have that
\begin{equation*}
D_s x_{t_a} = \sigma \exp(-(t_a - s)),
\end{equation*}
and the volatility is only time-dependent i.e 
$$
\partial_x (\beta(u,t_a,\bar{x}_u,\bar{y}_u) \mu(u,\bar{x}_u, \bar{y}_u,t_a,t_b)) = 0.
$$
Therefore, \eqref{cms_first_order_convexity} is equal to
\begin{equation*}
\mathbb{E}^{t_p}\left(S_{a,b}(t_a)\right) \approx  S^{ois}_{a,b}(0) + \frac{\partial_x S^{ois}_{a,b}(t_a, \bar{x}_0(t_a),\bar{y}_{t_a})\partial_x M(t_a,t_p, \bar{x}_0(t_a),\bar{y}_{t_a})}{M(0,t_p)} \frac{\sigma^{2}(1-\exp(-2kt_a))}{2k}.
\end{equation*}

In Figure \ref{fig:CMS}, we can observe the CMS convexity adjustment when the tenor of the underlying swap is 5Y. We have compared the above approximation and a Monte Carlo simulation for a Hull-White model with parameters $\sigma=0.01$ and $k=0.0007$

\begin{figure}[H]
	\begin{center}
		\includegraphics[scale=0.25]{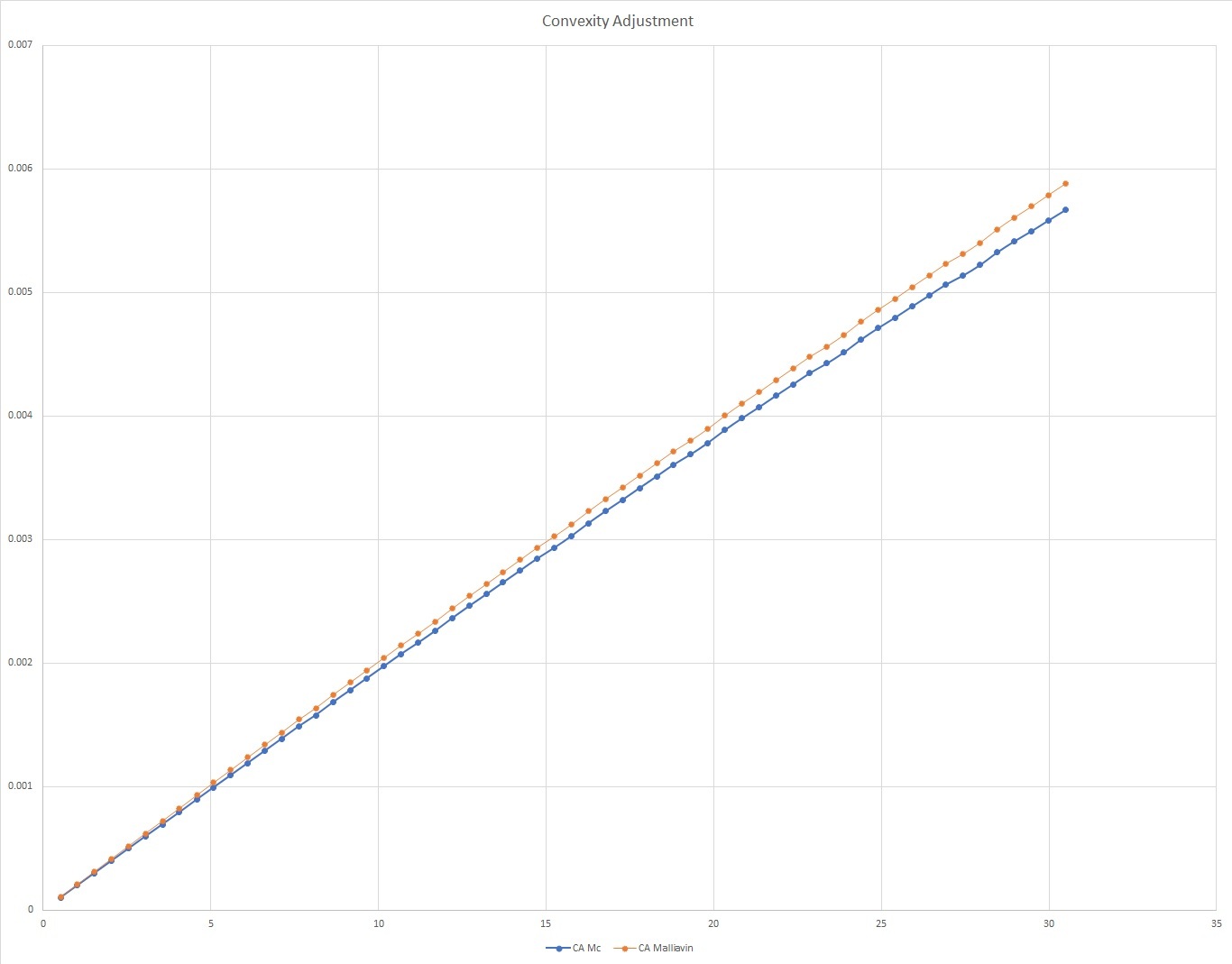}
	\end{center}
	\caption{CMS: Comparison Malliavin vs MC Simulation}
	\label{fig:CMS}
\end{figure} 
\end{example}

\section{Conclusions}\label{sec:Conclusion}
In the present paper, we develop a novel method based on the Malliavin calculus to find an approximation for the convexity adjustment for various classical interest rate products. Although the convexity adjustment could be calculated using other techniques, Malliavin calculus provides a simple way to get a template for the convexity adjustment. We find the approximation for Futures, OIS Futures, FRAs, and CMSs under a general family of the one-factor Cheyette model. We have also seen the excellent quality of the numerical accuracy of the formulas obtained.\\

In future work, the methodology could be extrapolated to a Cheyette model with stochastic volatility or even to a model with several factors.

\section*{Appendix}
\appendix
\renewcommand{\thesection}{\Alph{section}.\arabic{section}}

\section{Auxiliary lemmas}
\subsection{Estimation of $\mathbb{E}_s^{\mathbb{Q}}\left[(x_t - \bar{x}_{t})^{2}\right]$}\label{estimation_error_l2}
To obtain the order of the approximation, we will use the ideas of the paper \cite{BGM}. Basically, we will define a parametrization of the processes $x_{t,\epsilon}$ and $y_{t,\epsilon}$ with $\epsilon \in [0,1]$. The role of $\epsilon$ is only as an interpolation parameter between the process. We will suppose the next Cheyette model parametrization
\begin{align}\label{parametric_process}
y_{\epsilon,t} &= \int_{0}^{t} \exp\left(-2 \int_{u}^{t} k(w) dw\right) \epsilon^{2} \eta^{2}(u,x_{\epsilon,u},y_{\epsilon,u}) du \\
x_{\epsilon, t} &= \int_{0}^{t} \exp\left(-\int_{u}^{t} k(w) dw\right) \bar{y}_{\epsilon,u} du + \int_{0}^{t} \exp\left(-\int_{u}^{t} k(w) dw\right) \epsilon \eta(u,x_{\epsilon,u},y_{\epsilon,u})  \bar{y}_u du
\end{align}
The reason of $\epsilon^{2}$ in $y_{\epsilon,t}$ is because $\mathbb{E}^{\mathbb{Q}}\left[y_t\right]= \mathbb{V}ar(x_t)$. Therefore, if the scale of $x_{\epsilon,t}$ is $\epsilon$, then the scale of $y_{\epsilon,t}$ must be $\epsilon^{2}$.\\

We will start with the estimation of $\mathbb{E}^{\mathbb{Q}}\left[(y_t - \bar{y}_t)^{2}\right]$. We will define the next parametrization for $\nu \in [0,1]$
\begin{equation*}
z^{\nu}_{u} = \eta^{2}(u, \nu x_u, \nu y_u).
\end{equation*}
Then, we have that $z^{1}_{u} = \eta(u,x_u,y_u)$ and $z^{0}_{u} = \eta(u,0,0)$. Therefore,
\begin{align*}
\eta^{2}(u, x_u, y_u) &= \eta^{2}(u, 0, 0) + \int_{0}^{1} \partial_{\nu}z^{\nu}_{u} d\nu \\ 
&=  \eta^{2}(u, 0, 0) + 2 \int_{0}^{1} \eta(u, \nu x_u, \nu y_u)   \nabla_{x,y}\eta(u, \nu x_u, \nu y_u) \cdot (x_u, y_u) d\nu.
\end{align*}
Moreover, we have that
\begin{equation*}
y_t = \bar{y}_t + E_{y}\left(t\right)
\end{equation*}
with 
$$
E_{y}\left(t\right)= 2 \int_{0}^{t} \exp\left(- 2\int_{u}^{t} k(w) dw\right)  \int_{0}^{1} \eta(u, \nu x_u, \nu y_u)   \nabla_{x,y}\eta(u, \nu x_u, \nu y_u) \cdot (x_u, y_u) d\nu du
$$
Now, from Cauchy-Schwarz inequality, we obtain that
\begin{align} \label{upper_bound_E_y}
\mathbb{E}^{\mathbb{Q}}\left[(E_{y}\left(t\right))^{2}\right] &\leq 4 \mathbb{E}^{\mathbb{Q}}\left[\int_{0}^{t} \exp\left(-4 \int_{u}^{t} k(w) dw\right)  \left(\int_{0}^{1} \eta(u, \nu x_u, \nu y_u)   \|\nabla_{x,y}\eta(u, \nu x_u, \nu y_u)\|^{2} \|(x_u, y_u)\| d\nu \right)^2 du \right] ^{2}\nonumber \\ 
&\leq 4 \|\eta\|^{2}_{\infty} \max(\|\partial_x\eta\|^{2}_{\infty},\|\partial_y \eta\|^{2}_{\infty} ) \int_{0}^{t} \mathbb{E}^{\mathbb{Q}}\left[ \|(x_u, y_u)\|^{2}\right] du  \nonumber \\
&= M^{y}_k \int_{0}^{t} \mathbb{E}^{\mathbb{Q}}\left[\|(x_u, y_u)\|^{2}\right] du.
\end{align}
The next step is to estimate $\mathbb{E}^{\mathbb{Q}}\left[(x_t - \bar{x}_t)^{2}\right]$. From the definition of $x_t$, we have that
\begin{equation*}
x_t = \int_{0}^{t} \exp\left(- \int_{u}^{t} k(w) dw\right) \bar{y}_u du + \int_{0}^{t} \exp\left(- \int_{u}^{t} k(w) dw\right) E_{y}(u) du + \int_{0}^{t} \eta(t,x_u,y_u) dW_u. 
\end{equation*}
To find a expansion of $\eta(t,x_u,y_u)$ centered in $\bar{y}_u$, we will define $z^{\nu}_{u} = (1-\nu)\bar{y}_{u} + \nu y_u$. Then
\begin{equation*}
\eta(t,x_u,y_u) = \eta(t,x_u,\bar{y}_u) + \int_{0}^{1}\partial_y\eta(t,x_u,z^{\nu}_{u}) E_{y}(u) d\nu.
\end{equation*}
We will define 
\begin{equation*}
E_{x}\left(u\right) = \int_{0}^{1}\partial_y\eta(t,x_u,z^{\nu}_{u}) E_{y}(u) d\nu. 
\end{equation*}
Therefore, we have the next representation of $x_t$:
\begin{equation*}
x_t = \bar{x}_t + \int_{0}^{t} \exp\left(- \int_{u}^{t} k(w) dw\right) E_{y}(u) du + \int_{0}^{t} \exp\left(- \int_{u}^{t} k(w) dw\right) E_{x}(u) dW_u.
\end{equation*}
Note that
\begin{equation*}
 \mathbb{E}^{\mathbb{Q}}[(E_x(u))^{2}] \leq \|\partial_y \eta \|^{2}_{\infty} \mathbb{E}^{\mathbb{Q}}[(E_{y}(u))^{2}]. 
\end{equation*}
Using (\ref{upper_bound_E_y}), we get that
\begin{equation}\label{upper_bound_E_x}
 \mathbb{E}^{\mathbb{Q}}[(E_x(u))^{2}] \leq 4 \|\eta\|^{2}_{\infty} \|\partial_y \eta\|^{2}_{\infty} \max(\|\partial_x\eta\|^{2}_{\infty},\|\partial_y \eta\|^{2}_{\infty} ) \int_{0}^{u} \mathbb{E}^{\mathbb{Q}}\left[ \|(x_s, y_s)\|^{2}\right] ds.
\end{equation}
By using \eqref{upper_bound_E_x} and \eqref{upper_bound_E_y}, we have that 
\begin{align*}
\mathbb{E}^{\mathbb{Q}}[(x_t - \bar{x}_t)^{2}] & \leq 2 \mathbb{E}^{\mathbb{Q}}\left[ \int_{0}^{t} \exp\left(- \int_{u}^{t} k(w) dw\right) E_{y}(u) du    \right]^2\\
&+ \mathbb{E}^{\mathbb{Q}}\left[ \int_{0}^{t} \exp\left(- \int_{u}^{t} k(w) dw\right) E_{x}(u) dW_u\right]^{2} \\
&\leq 2 \mathbb{E}^{\mathbb{Q}}\left[\int_{0}^{t} \exp\left(- 2\int_{u}^{t} k(w) dw\right) \mathbb{E}^{\mathbb{Q}}(E^{2}_{y}(u)) du\right]\\
&+ \int_{0}^{t} \exp\left(- 2\int_{u}^{t} k(w) dw\right) \mathbb{E}^{\mathbb{Q}}[E^{2}_{x}(u)] du. 
\end{align*}
Join each part and for an adequate constant $M^{x}_k$, we have the following inequality 
\begin{equation*}
\mathbb{E}^{\mathbb{Q}}[(x_t - \bar{x}_t)^{2}]\leq M^{x}_k \int_{0}^{t} \exp\left(- 2\int_{u}^{t} k(w) dw\right) \mathbb{E}^{\mathbb{Q}}\left[ \|(x_u, y_u)\|^{2}\right] du.
\end{equation*}
Finally, we only have to estimate $\mathbb{E}^{\mathbb{Q}}\left[ \|(x_s, y_s)\|^{2}\right]$. From the definition of $y_t$ and given that $\|\eta\|_{\infty} < \infty$, we have that
\begin{equation*}
\mathbb{E}^{\mathbb{Q}}[y^{2}_u] \leq \|\eta\|^{2}_{\infty} \int_{0}^{u} \exp\left(- 2\int_{s}^{u} k(w) dw\right) ds.
\end{equation*}
We must note that at the short term and under the assumptions (\ref{boundedness_reversion}), we get that
\begin{equation*}
\lim_{t \to 0} \mathbb{E}^{\mathbb{Q}}[y^{2}_u] = \|\eta\|^{2}_{\infty} t
\end{equation*}
and 
\begin{equation*}
\lim_{t \to \infty} \mathbb{E}^{\mathbb{Q}}[y^{2}_u] < \infty.
\end{equation*}
Now, we will estimate $\mathbb{E}^{\mathbb{Q}}[x^{2}_u]$. From the definition of $x_t$ and using $(a+b)^2 \leq 2(a^{2}+ b^{2})$, we have that
\begin{align*}
\mathbb{E}^{\mathbb{Q}}[x^{2}_u] &\leq 2 \mathbb{E}^{\mathbb{Q}}\left[\int_{0}^{u} \exp\left(- \int_{s}^{u} k(w) dw\right) y_s  ds \right]^{2} +  2 \mathbb{E}^{\mathbb{Q}}\left[\int_{0}^{u} \exp\left(- \int_{s}^{u} k(w) dw\right) \eta(s,x_s,y_s)  dW_s \right]^{2} \nonumber \\
&\leq 2 \int_{0}^{u} \exp\left(- 2\int_{s}^{u} k(w) dw\right) \mathbb{E}^{\mathbb{Q}}\left[y_s\right]^{2} ds + 2 \int_{0}^{u} \exp\left(- 2 \int_{s}^{u} k(w) dw\right) \mathbb{E}^{\mathbb{Q}}\left[\eta^{2}(s,x_s,y_s)\right] ds \\
&\leq 2 \|\eta\|_{\infty} \int_{0}^{u} \exp\left(- 2\int_{s}^{u} k(w) dw\right) \int_{0}^{s}  \exp\left(- 2\int_{s_1}^{s} k(w) dw\right) ds_1 ds\\
&+  2 \|\eta\|_{\infty} \int_{0}^{u} \exp\left(- 2\int_{s}^{u} k(w) dw\right) ds.
\end{align*}
So as before at the short term
$$
\lim_{t \to 0} \mathbb{E}^{\mathbb{Q}}[x^{2}_u] = 2\|\eta\|^{2}_{\infty} t
$$
and when $u \to \infty$, we have that $\mathbb{E}^{\mathbb{Q}}[x^{2}_u]$ remains bounded. Then, if we use the above inequalities 
\begin{align*}
\mathbb{E}^{\mathbb{Q}}[(y_t - \bar{y}_t)^{2}] &\leq M^{y}_k \int_{0}^{t}  \exp\left(- 2\int_{u}^{t} k(w) dw\right)  \int_{0}^{u} \exp\left(- 2\int_{s}^{u} k(w) dw\right) ds du
\end{align*}
and
\begin{align*}
\mathbb{E}^{\mathbb{Q}}[(x_t - \bar{x}_t)^{2}] &\leq 2 M^{x}_k \|\eta\|_{\infty} \int_{0}^{t} \exp\left(- 2\int_{u}^{t} k(w) dw\right)\int_{0}^{u} \exp\left(- 2\int_{s}^{u} k(w) dw\right) \cdot \\
& \qquad \hspace{1.5cm} \cdot \int_{0}^{s}  \exp\left(- 2\int_{s_1}^{s} k(w) dw\right) ds_1 ds du \\
&+ (2 M^{x}_K \|\eta\|_{\infty} + 1) \int_{0}^{t}  \exp\left(- 2\int_{u}^{t} k(w) dw\right)  \int_{0}^{u} \exp\left(- 2\int_{s}^{u} k(w) dw\right) ds du.
\end{align*}
Therefore, we have that
\begin{align*}
\mathbb{E}^{\mathbb{Q}}[(x_t - \bar{x}_t)^{2}] &\leq \frac{2}{3} M^{x}_k \|\eta\|_{\infty} \left( \int_{0}^{t} \exp\left(- 2\int_{u}^{t} k(w)dw \right) du  \right)^{3}  \\
&+ \frac{(2 M^{x}_K \|\eta\|_{\infty} + 1)}{2} \left( \int_{0}^{t}  \exp\left(- 2\int_{u}^{t} k(w)dw \right) du  \right)^{2}.
\end{align*}
\begin{lemma}\label{f_x_y}[Estimation $f(x_t,y_t)$] Given $f$ continuous and derivable, with $\|\partial_x f\|_{\infty} <\infty$ and $\|\partial_y f\|_{\infty}<\infty$. Then
\begin{equation*}
f(x_t,y_t)=f(\bar{x}_t,\bar{y}_t) + E(t)
\end{equation*}
where $E(t)=\int_{0}^{1} \nabla_{x,y}\ (u, x_{\epsilon,t}, y_{\epsilon,t}) \cdot (x_t - \bar{x}_t,y_t - \bar{y}_t) d\epsilon$ and
\begin{align*}
x_{\epsilon,t} &= \epsilon x_t + (1-\epsilon) \bar{x}_t, \\
y_{\epsilon,t} &= \epsilon y_t + (1-\epsilon) \bar{y}_t.
\end{align*}
In addition, we have that 
\begin{equation*}
\mathbb{E}[E^{2}(t)] \leq \max(\|\partial_x f\|_{\infty}, \|\partial_x f\|_{\infty}) \mathbb{E}^{\mathbb{Q}}[\|(x_t - \bar{x}_t,y_t - \bar{y}_t)\|^{2}].
\end{equation*}
\end{lemma}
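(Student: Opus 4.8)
The plan is to obtain the decomposition by applying the fundamental theorem of calculus to the scalar function $\epsilon \mapsto f(x_{\epsilon,t}, y_{\epsilon,t})$ on $[0,1]$. By construction the pair $(x_{\epsilon,t}, y_{\epsilon,t})$ traces the straight segment joining $(\bar{x}_t, \bar{y}_t)$ at $\epsilon = 0$ to $(x_t, y_t)$ at $\epsilon = 1$, and it is affine in $\epsilon$ with $\partial_\epsilon x_{\epsilon,t} = x_t - \bar{x}_t$ and $\partial_\epsilon y_{\epsilon,t} = y_t - \bar{y}_t$. Since $f$ is differentiable, the chain rule gives $\partial_\epsilon f(x_{\epsilon,t}, y_{\epsilon,t}) = \nabla_{x,y} f(x_{\epsilon,t}, y_{\epsilon,t}) \cdot (x_t - \bar{x}_t, y_t - \bar{y}_t)$, so that integrating over $\epsilon \in [0,1]$ yields $f(x_t, y_t) - f(\bar{x}_t, \bar{y}_t) = E(t)$ with exactly the integral representation stated. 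This first part is purely deterministic (pathwise in $\omega$) and requires no probabilistic input.

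For the $L^2$ estimate I would argue pathwise and then take expectations. Writing $v = (x_t - \bar{x}_t, y_t - \bar{y}_t)$, Jensen's inequality with respect to the probability measure $d\epsilon$ on $[0,1]$ gives $E^2(t) \le \int_0^1 \bigl(\nabla_{x,y} f(x_{\epsilon,t}, y_{\epsilon,t}) \cdot v\bigr)^2 d\epsilon$. An application of the Cauchy--Schwarz inequality in $\mathbb{R}^2$ bounds the integrand by $\|\nabla_{x,y} f(x_{\epsilon,t}, y_{\epsilon,t})\|^2 \|v\|^2$, and the uniform bounds $\|\partial_x f\|_\infty, \|\partial_y f\|_\infty < \infty$ give $\|\nabla_{x,y} f\|^2 = (\partial_x f)^2 + (\partial_y f)^2 \le 2\max(\|\partial_x f\|^2_\infty, \|\partial_y f\|^2_\infty)$ uniformly in $\epsilon$ and $\omega$. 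Since this upper bound no longer depends on $\epsilon$, the $d\epsilon$-integral disappears and, taking expectations, one obtains $\mathbb{E}[E^2(t)] \le 2\max(\|\partial_x f\|^2_\infty, \|\partial_y f\|^2_\infty)\, \mathbb{E}^{\mathbb{Q}}[\|(x_t - \bar{x}_t, y_t - \bar{y}_t)\|^2]$, which is the claimed bound.

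There is no serious obstacle here; the result is essentially a quantitative first-order Taylor expansion along a segment. The only points requiring a little care are bookkeeping ones: first, ensuring the $d\epsilon$-integral defining $E(t)$ and the subsequent expectation are well defined, which follows from continuity of $f$ together with the boundedness of $\nabla_{x,y} f$ and the fact --- established in the preceding subsection --- that $x_t, \bar{x}_t, y_t, \bar{y}_t$ lie in $L^2(\mathbb{Q})$; and second, the constant, since the estimate as stated should carry a square on the supremum norms (and a harmless factor $2$ coming from $\|\nabla_{x,y} f\|^2 \le 2\max(\|\partial_x f\|^2_\infty, \|\partial_y f\|^2_\infty)$). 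Once the bound is in hand, it is exactly the device that converts the $L^2$ control of $x_t - \bar{x}_t$ and $y_t - \bar{y}_t$ derived earlier into the error estimates $E(t_0), E(t_1), E(t_a)$ appearing in Theorems \ref{Th_CA_futures}--\ref{Th_CA_CMS}, yielding the advertised $\mathcal{O}(t_0)$ behaviour as $t_0 \to 0$ and finiteness as $t_0 \to \infty$.
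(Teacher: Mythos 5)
Your proof is correct and follows essentially the same route as the paper's: the fundamental theorem of calculus along the segment $(x_{\epsilon,t},y_{\epsilon,t})$, followed by Cauchy--Schwarz and the uniform bounds on $\partial_x f$, $\partial_y f$. Your bookkeeping of the constant is in fact more careful than the paper's --- the bound should indeed read $2\max(\|\partial_x f\|^{2}_{\infty},\|\partial_y f\|^{2}_{\infty})$ rather than $\max(\|\partial_x f\|_{\infty},\|\partial_x f\|_{\infty})$, whose missing squares and repeated $\partial_x f$ appear to be typos in the statement.
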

\begin{proof}
From the definition of $x_{\epsilon,t}$ and $y_{\epsilon,t}$ we have by the fundamental theorem calculus that

\begin{equation*}
f(x_t,y_t) - f(\bar{x}_t,\bar{y}_t) = \int_{0}^{1} \nabla_{x,y}f (u, x_{\epsilon,t}, y_{\epsilon,t})\cdot (x_t - \bar{x}_t,y_t - \bar{y}_t) d\epsilon
\end{equation*}
Now, if we use Cauchy-Schwarz and boundness of partial derivatives of $f$ we have that
\begin{equation*}
\mathbb{E}^{\mathbb{Q}}[(f(x_t,y_t) - f(\bar{x}_t,\bar{y}_t))^{2}] \leq \max(\|\partial_x f\|_{\infty}, \|\partial_x f\|_{\infty}) \mathbb{E}^{\mathbb{Q}}[\|(x_t - \bar{x}_t,y_t - \bar{y}_t)\|^{2}].
\end{equation*}

\end{proof}

\begin{lemma}\label{DsX}[Approximation $D_s \bar{x}_{t_a}$]
Given the Cheyette model in \eqref{short_rate_cheyette}, the hypotheses \ref{boundedness_volatility} and \ref{boundedness_reversion}, and considering the approximations in \eqref{approximation_y_t} and \eqref{approximation_x_t_a}. Then, 
\begin{align}\label{approximation_D_s_x_t}
D_s \bar{x}_{t_a} &= \beta(s,t_a,\bar{x}_s,\bar{y}_s) \bar{M}(s,t_a,t_p) \quad \text{under the measure $\mathbb{Q}^{t_p}$} \nonumber \\
D_s \bar{x}_{t_a} &= \beta(s,t_a,\bar{x}_s,\bar{y}_s) \bar{M}(s,t_a)  \quad \text{under the measure $\mathbb{Q}$}
\end{align}
where
\begin{align*}
\bar{M}(s,t_a,t_p) &= \exp\left(-\int_{s}^{t_a} \left( \frac{\left(\partial_x \beta(u,t_a,\bar{x}_0,\bar{y}_{t_a})\right)^{2}}{2} - \exp\left(-\int_{u}^{t_a}k_u du\right) \partial_x (\eta(u, \bar{x}_u, \bar{y}_{u}) \bar{\nu}(u,t_p))\right) du \right) \\ 
&\cdot\exp\left(\int_{s}^{t_0} \partial_x \beta(u,t_a,\bar{x}_0,\bar{y}_{t_a}) dW^{\mathbb{Q}}_u \right)
\end{align*}
and
\begin{equation*}
\bar{M}(s,t_a) = \exp\left(-\int_{s}^{t_a} \left( \frac{\left(\partial_x \beta(u,t_a,\bar{x}_0,\bar{y}_{t_a})\right)^{2}}{2} - \right) du \right) \exp\left(\int_{s}^{t_0} \partial_x \beta(u,t_a,\bar{x}_0,\bar{y}_{t_a}) dW^{\mathbb{Q}}_u \right)
\end{equation*} 
with $\beta(u,t_a,x,y) = \exp\left(-\int_{u}^{t_a}k(w)dw\right) \partial_x \eta(u,x,y)$.\\

We have also that
\begin{align}\label{approsimation_E_s_x_t}
\mathbb{E}_s^{t_p}\left[D_s \bar{x}_{t_a}\right] &\approx \beta(s,t_a,\bar{x}_s,\bar{y}_s) \overline{DM}(s,t_a), \nonumber \\
\mathbb{E}_s^{\mathbb{Q}}\left[D_s \bar{x}_{t_a}\right]&\approx \beta(s,t_a,\bar{x}_s,\bar{y}_s)
\end{align}
where 
\begin{align*}
\overline{DM}(s,t_a) &= \mathbb{E}^{t_p}_s\biggl[\exp\biggr(-\int_{s}^{t_a} \exp\Bigl(-\int_{u}^{t_a}k_u du\Bigr) \Bigl(\partial_x \eta(u, \bar{x}_0(t_a), \bar{y}_{u}) \bar{\nu}(u,t_p, \bar{x}_0(t_a)) + \eta(u,\bar{x}_0(t_a), \bar{y}_{u}) \\
\qquad &\hspace{3cm} \cdot \partial_x \bar{\nu} (t,t_p, \bar{x}_0(t_a))\Bigr) du\biggr)\biggr]
\end{align*}
with  
\begin{align*}
\bar{\nu}(t,t_p) &= \int_{t}^{t_p} \eta(s,\bar{x}_s,\bar{y}_s) ds, \\
\bar{\nu}(t,t_p, x_0) &= \int_{t}^{t_p} \eta(s,x,\bar{y}_s) ds,\\
\intertext{and}
\partial_x \bar{\nu}(t,t_p,x_0) &= \int_{t}^{t_p} \eta(s,x_0,\bar{y}_s) ds.
\end{align*}
\end{lemma}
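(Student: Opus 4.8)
The plan is to compute $D_s \bar{x}_{t_a}$ directly from the integral representation \eqref{approximation_x_t_a} and then solve the linear equation that results. The crucial preliminary observation is that the approximation $\bar{y}_t$ defined in \eqref{approximation_y_t} is purely deterministic (it freezes $\eta$ at $x(0)=y(0)=0$), so $D_s \bar{y}_t = 0$ for all $s,t$. Applying $D_s$ to \eqref{approximation_x_t_a} and using the rule $D_s \int_0^{t_a} \phi_u \, dW_u = \phi_s + \int_s^{t_a} D_s \phi_u \, dW_u$ for an adapted integrand, the deterministic drift integral drops out and the diagonal contribution of the stochastic integral produces exactly the initial value
\begin{equation*}
D_s \bar{x}_s = \exp\left(-\int_s^{t_a} k(w)\, dw\right) \eta(s,\bar{x}_s,\bar{y}_s) = \beta(s,t_a,\bar{x}_s,\bar{y}_s).
\end{equation*}
Since $D_s[\eta(u,\bar{x}_u,\bar{y}_u)] = \partial_x \eta(u,\bar{x}_u,\bar{y}_u)\, D_s \bar{x}_u$ (again because $\bar{y}$ carries no noise), the process $V_u := D_s \bar{x}_u$, $u \in [s,t_a]$, solves the linear stochastic differential equation $dV_u = \partial_x \beta(u,t_a,\bar{x}_u,\bar{y}_u)\, V_u \, dW_u^{\mathbb{Q}}$ with initial condition $V_s = \beta(s,t_a,\bar{x}_s,\bar{y}_s)$.

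Next I would solve this equation. Being linear with multiplicative noise, its solution is the Dol\'eans--Dade exponential, which immediately yields the $\mathbb{Q}$--formula $D_s \bar{x}_{t_a} = \beta(s,t_a,\bar{x}_s,\bar{y}_s)\, \bar{M}(s,t_a)$, with $\bar{M}(s,t_a)$ the stochastic exponential of $\int_s^{\cdot} \partial_x \beta \, dW^{\mathbb{Q}}$. To pass to the forward measure $\mathbb{Q}^{t_p}$, I would invoke Girsanov: the change of numeraire from $B_t$ to $P_{ois}(t,t_p)$ replaces $dW^{\mathbb{Q}}$ by $dW^{t_p}$ plus a drift equal to the volatility of the numeraire bond. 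Multiplied by the diffusion coefficient $\eta$ of $\bar{x}$, this drift contributes the extra term $\exp(-\int_u^{t_a} k)\, \partial_x(\eta(u,\bar{x}_u,\bar{y}_u)\bar{\nu}(u,t_p))$ to the generator of the linear equation for $V_u$; solving once more produces $\bar{M}(s,t_a,t_p)$.

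Finally, for the conditional expectations \eqref{approsimation_E_s_x_t}, I would freeze the coefficient $\partial_x \beta$ — bounded by Hypothesis \ref{boundedness_volatility} — and the analogous drift coefficients at the deterministic value $\bar{x}_0(t_a)$. With the diffusion coefficient deterministic, the stochastic exponential in $\bar{M}$ is a genuine exponential martingale whose conditional expectation equals $1$ under $\mathbb{Q}$, giving $\mathbb{E}_s^{\mathbb{Q}}[D_s \bar{x}_{t_a}] \approx \beta(s,t_a,\bar{x}_s,\bar{y}_s)$; under $\mathbb{Q}^{t_p}$ the (now deterministic) Girsanov drift survives the expectation and collapses to the factor $\overline{DM}(s,t_a)$, after expanding $\partial_x(\eta\bar{\nu})$ by the product rule. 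The error introduced by freezing $\bar{x}_u \mapsto \bar{x}_0(t_a)$ is controlled in $L^2$ by the estimates for $\mathbb{E}^{\mathbb{Q}}[(x_t-\bar{x}_t)^2]$ and $\mathbb{E}^{\mathbb{Q}}[\|(x_u,y_u)\|^2]$ assembled earlier in Appendix \ref{estimation_error_l2}, together with the Lipschitz bound on $\eta$.

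The step I expect to be the main obstacle is the change of measure: correctly deriving the $\mathbb{Q}^{t_p}$--drift and matching it to the numeraire volatility term $\partial_x(\eta\,\bar{\nu}(\cdot,t_p))$ requires care with the interplay between the Malliavin derivative (defined with respect to $W^{\mathbb{Q}}$) and the Girsanov reweighting. The approximation steps — freezing the state variable and replacing the true exponential martingale's conditional mean by $1$ — are precisely where the identities turn into approximations, so quantifying those errors through the $L^2$ bounds is the delicate part.
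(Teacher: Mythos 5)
Your proposal is correct and follows essentially the same route as the paper's proof: both rest on the Girsanov relation $dW^{\mathbb{Q}^{t_p}} = dW^{\mathbb{Q}} + \bar{\nu}(t,t_p)\,dt$, the fact that $\bar{y}$ is deterministic so the Malliavin derivative acts only through $\bar{x}$, the resulting linear equation for $D_s \bar{x}_{t_a}$ solved by a stochastic exponential, and the freezing of $\bar{x}_u$ at $\bar{x}_0(t_a)$ to evaluate the conditional expectations. The only immaterial difference is the order of operations --- the paper changes measure to $\mathbb{Q}^{t_p}$ first and then differentiates, whereas you differentiate under $\mathbb{Q}$ and then apply Girsanov --- and your write-up is in fact more explicit than the paper's about the Dol\'eans--Dade solution and the martingale property used to obtain $\mathbb{E}_s^{\mathbb{Q}}\left[D_s \bar{x}_{t_a}\right] \approx \beta(s,t_a,\bar{x}_s,\bar{y}_s)$.
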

\begin{proof}
From 
\begin{equation*}
dW^{\mathbb{Q}^{t_p}} = dW^{\mathbb{Q}} + \bar{\nu}(t,t_p) dt
\end{equation*}
we have that under the measure $\mathbb{Q}^{t_p}$ 
\begin{align*}
\bar{x}_{t_a} = \bar{x}_0(t_a) &+ \int_{0}^{t_a} \exp\left(-\int_{s}^{t_a}k(w)dw\right) \bar{y}_s ds - \int_{0}^{t_a} \exp\left(-\int_{s}^{t_a}k(w)dw\right) \bar{\nu}(s, t_p) \eta(s,\bar{x}_s,\bar{y}_s) ds   \\
&+ \int_{0}^{t_a}  \exp\left(-\int_{s}^{t_a}k(w)dw \right)\eta(s,\bar{x}_s,\bar{y}_s) dW_s^{\mathbb{Q}^{t_p}}. 
\end{align*}
Now, if we apply $D_s$ in the above equality and we use the last approximation we have that
\begin{align*}
\mathbb{E}^{t_p}_s[D_s \bar{x}_{t_a}] &\approx \eta(s,\bar{x}_s,\bar{y}_s) \exp\left(-\int_{s}^{t_a}k(w)dw \right)\\ &\cdot \exp\biggr(-\int_{s}^{t_a} \exp\Bigl(-\int_{u}^{t_a}k(w)dw\Bigr) \Bigl(\partial_x \eta(u, \bar{x}_0(t_a), \bar{y}_{u}) \bar{\nu}(u,t_p, \bar{x}_0(t_a)) + \eta(u,\bar{x}_0(t_a), \bar{y}_{u}) \\
\qquad &\hspace{3cm} \cdot \partial_x \bar{\nu} (t,t_p, \bar{x}_0(t_a))\Bigr) du\biggr).
\end{align*}
\end{proof}

\subsection{Approximation of $\mathbb{E}_s^{\mathbb{Q}}\left[D_s \bar{x}_{t_a}\right]$}
As in the previous appendix, we have
\begin{align*}
\bar{x}_{t_a} &= \bar{x}_0(t_a) + \int_{0}^{t_a} \exp\left(-\int_{s}^{t_a}k(w)dw\right) \bar{y}_u du + \int_{0}^{t_a} \exp\left(-\int_{s}^{t_a}k(w)dw \right) \eta(u,\bar{x}_u,\bar{y}_u) dW_u^{\mathbb{Q}}
\end{align*}
and therefore (see (\ref{approsimation_E_s_x_t}))
\begin{equation}
D_s \bar{x}_{t_a} =  \exp\left(-\int_{s}^{t_a}k(w)dw \right) \eta(s,\bar{x}_s,\bar{y}_{s})\bar{M}(s,t_a).
\end{equation}
Now, if we take $\mathbb{E}_s^{\mathbb{Q}}\left[\cdot\right]$, we get
\begin{equation}\label{approximation_spot_E_s_x_t}
\mathbb{E}^{\mathbb{Q}}_s\left(D_s \bar{x}_{t_a} \right)=\exp\left(-\int_{s}^{t_a}k(w)dw \right) \eta(s,\bar{x}_s,\bar{y}_{s}).
\end{equation}
Then, we obtain that
\begin{equation}\label{approximation_spot_E_s_x_t}
\mathbb{E}^{\mathbb{Q}}_s\left(D_s \bar{x}_{t_a} \right) \approx \exp\left(-\int_{s}^{t_a}k(w)dw \right) \eta(s,\bar{x}_0(t_a),\bar{y}_{s}).
\end{equation}

\subsection{Approximation of $\mathbb{E}^{01}\left[\bar{x}_{t_a}\right]$}
It is easy to show that the bond dynamics under the HJM assumption is
\begin{equation}
\frac{dP(t,T)}{P(t,T)} = r_t dt - \nu(t,T)dW^{\mathbb{Q}}_t  
\end{equation}
where we must remember that  $\nu(t,T)=\int_{t}^{T}\sigma(t,s) ds$. Therefore, if we apply the Itô formula we have that 
\begin{equation}\label{annuity_spot_dynamic}
\frac{d01(t,t_a,t_b)}{01(t,t_a,t_b)} = r_t dt - \sigma_{01}(t,t_a,t_b)dW^{\mathbb{Q}}_t
\end{equation}
with 
\begin{equation} \label{annuity_vol}
\sigma_{01}(t,t_a,t_b) = \frac{\sum_{i=a+1}^{b} \delta_{i-1,i} P(t,t_i) \nu(t,t_i)}{01(t,t_a,t_b)}.
\end{equation}
We define $w_i(t)=\frac{\delta_{i-1,i} P(t,t_i)}{01(t,t_a,t_b)}$, then
\begin{equation*}
\sigma_{01}(t,t_a,t_b) = \sum_{i=a+1}^{b}  w_i(t) \nu(t,t_i).
\end{equation*}
From (\ref{annuity_vol}) and since $\frac{P_{ois}(t,T)}{01(t)}$ is a martingale, we have that
\begin{equation*}
dW^{01}_t = dW^{\mathbb{Q}}_t - \sigma_{01}(t,t_a,t_b)dt. 
\end{equation*}
Then, if we freeze the weights $w_i(t)$, we get the next approximation of (\ref{annuity_vol})
\begin{equation*} \label{approximation_o1_vol}
\bar{\sigma}_{01}(t,t_a,t_b) \approx \sum_{i=a+1}^{b} w_i(0) \nu(t,t_i).
\end{equation*}
Using (\ref{short_rate_cheyette}) and the above approximation, we obtain that
\begin{align}
\bar{x}_{t_a} &\approx \bar{x}_0(t_a)  + \int_{0}^{t_a} \exp\left(-\int_{s}^{t}k(w)dw\right) \bar{y}_u du + \int_{0}^{t_a}  \exp\left(-\int_{s}^{t_a}k(w)dw \right) \eta(u,\bar{x}(u),\bar{y}_u) dW_u^{\mathbb{Q}} \nonumber \\
&=  \bar{x}_0(t)  + \int_{0}^{t_a} \exp\left(-\int_{s}^{t_a}k(w)dw\right) \bar{y}_u du + \int_{0}^{t_a} \exp\left(-\int_{s}^{t_a}k(w)dw \right) \eta(u,\bar{x}_u,\bar{y}_u) \bar{\sigma}_{01}(u, t_a, t_b) du  \nonumber \\ 
&\hspace{1.2cm} + \int_{0}^{t_a} \exp\left(-\int_{s}^{t_a}k(w)dw\right) \eta(u,\bar{x}_u,\bar{y}_u) dW_u^{01}.
\end{align}
Therefore, we have that
\begin{align}
\mathbb{E}^{01}\left[\bar{x}_t\right] &\approx \bar{x}_0(t_a)  + \int_{0}^{t_a} \exp\left(-\int_{s}^{t_a}k(w)dw\right) \bar{y}_u du \nonumber \\ 
&\hspace{1.3cm} + \int_{0}^{t_a} \exp\left(-\int_{s}^{t_a}k(w)dw\right) \eta(u,\bar{x}_u,\bar{y}_u) \bar{\sigma}_{01}(u, t_a, t_b,\bar{x}_0(t_a)) du
\end{align}
with
$$
\bar{\sigma}_{01}(u, t_a, t_b,\bar{x}_0(t_a)) = \sum_{i=a+1}^{b} w_i(0) \nu(t,t_i,\bar{x}_0(t_a))
$$
and
$$
\nu(t,t_i,\bar{x}_0(t_a)) = \int_{t}^{t_i} \sigma(t,u,\bar{x}_0(t_a),\bar{y}_u) du.
$$ 

\subsection{Approximation of $\mathbb{E}^{01}\left[D_s \bar{x}_{t_a} \right]$}\label{approximation_under_annuity_measeure_d_s}
Let us to remember that $\nu(t,t_i)=h(t,\bar{x}_t,\bar{y}_t) \frac{G(t,t_i)}{\beta_{t,k}(t)}$ with  $\beta_{t,k}(t) = \exp\left(\int_{0}^{t}k(w)dw\right)$.
Therefore 
\begin{equation*}
D_s \nu(t,T_i) = \partial_x \nu(t,T_i, \bar{x}_t, \bar{y}_t) D_s \bar{x}_t  \frac{G(t,t_i)}{\beta_{t,k}}.
\end{equation*}
Then, we have that
\begin{equation}
D_s \sigma_{0,1}(t,t_a,t_b) =  D_s \bar{x}_t \mu(t,\bar{x}_t, \bar{y}_t, t_a,t_b)
\end{equation}
where 
$$
\mu(t,\bar{x}_t, \bar{y}_t, t_a,t_b) =  \sum_{i=a+1}^{b} w_i(t) \partial_x \nu(t,t_i, \bar{x}_t, \bar{y}_t) \frac{G(t,t_i)}{\beta_{t,k}},
$$
and $\frac{\delta_{i-1,i} P(t,t_i)}{01(t,t_a,t_b)}$. From (\ref{approximation_x_t_a})  and the Girsanov's theorem, we get that
\begin{align*}
\bar{x}_{t_a} =  \bar{x}_0(t_a)  &+ \int_{0}^{t_a} \exp\left(-\int_{s}^{t_a}k(w)dw\right) \bar{y}_u du \nonumber \\\ 
&+ \int_{0}^{t_a} \exp\left(-\int_{s}^{t_a}k(w)dw\right) \sigma_{01}(u,t_a,t_b) \mu(u,\bar{x}_u, \bar{y}_u, t_a,t_b) du \nonumber \\\
&+  \int_{0}^{t_a}  \exp\left(-\int_{s}^{t_a}k(w)dw \right) \eta(u,\bar{x}_u,\bar{y}_u) dW_u^{\mathbb{Q}^{01}}.
\end{align*}
Then, taking $D_s$, we have that
\begin{equation*}
D_s \bar{x}_{t_a} = \exp\left(-\int_{s}^{t_a}k(w)dw \right) \eta(s,\bar{x}_{t_a},\bar{y}_{t_a})\bar{M}^{01}(s,t_a)
\end{equation*}
where
\begin{align*}
\bar{M}^{01}(s,t_a) = &\exp\left(-\int_{s}^{t_a} \left(\frac{\left(\partial_x \beta(u,t_a,\bar{x}_u,\bar{y}_{u})\right)^{2}}{2} + \partial_x (\beta(u,t_a,\bar{x}_u,\bar{y}_{u}) \mu(u,\bar{x}_u,\bar{y}_u,t_a,t_b)) du \right)\right) \\
&\exp\left(\int_{s}^{t_a} \partial_x \beta(u,t_a,\bar{x}_0,\bar{y}_{t_a}) dW^{\mathbb{Q}^{01}}_u \right)
\end{align*}
and $\beta(u,t_a,x,y) = \exp\left(-\int_{u}^{t_a}k(w)dw\right)\eta(u,x,y)$.\\
Then, 
\begin{align}\label{approximation_E_01_Ds_x_t}
\mathbb{E}^{01}_s\left[D_s \bar{x}_{t_a}\right] &= \beta(s,t_a,\bar{x}_s,\bar{y}_s)  \mathbb{E}_s^{01}\left(\exp\left(-\int_{s}^{t_a}\partial_x (\beta(u,t_a,\bar{x}_u,\bar{y}_u) \mu(u,\bar{x}_u, \bar{y}_u,t_a,t_b)) du \right)\right) \nonumber \\
&\approx \beta(s,t_a,\bar{x}_s,\bar{y}_s) \exp\left(-\int_{s}^{t_a}\partial_x (\beta(u,t_a,\bar{x}_u,\bar{y}_u) \mu(u,\bar{x}_u, \bar{y}_u,t_a,t_b))|_{\bar{x}_u=\bar{x}_{0}(t_a)}  du \right)
\end{align}

\section{Proofs}
\subsection{Proof Theorem \ref{Th_CA_futures}}\label{Proof_CA_futures}
Note that
\begin{equation*}
CA(t, t_0, t_1, t_2) = \hat{L}_{E}(t,t_0, t_1, t_2) - \mathbb{E}_t^{\mathbb{Q}^{t_2}}\left[L_{E}(t_0, t_1, t_2) \right].
\end{equation*}
From (\ref{ois_forward_rate_curve}) and since $f_{ois}(t,T)$ is a $\mathbb{Q}^{T}$ martingale, we have that
\begin{equation}\label{girsanov_spot_forward}
dW^{\mathbb{Q}^{t_2}} = dW^{\mathbb{Q}} + \nu(t,t_2) dt. 
\end{equation}
Applying (\ref{general_convexity}) with $f(x_t)=L_{E}(t,t_0, t_1, t_2)$, $\mathbb{Q}_1=\mathbb{Q}$, $\mathbb{Q}_1=\mathbb{Q}^{t_2}$ and $\lambda_t = \nu(t,t_2)$,  we get that
\begin{equation}\label{ca_general_future}
CA(t, t_0, t_1, t_2) = \mathbb{E}^{\mathbb{Q}^{t2}}\left[\int_{0}^{t_0} \mathbb{E}^{\mathbb{Q}}_{s}\Bigl[D_s L_{E}(t_0,t_1,t_2) \Bigr] \nu(s,t_2) ds \right]
\end{equation}
where $\nu(t,T)$ has been defined in (\ref{ois_forward_rate_curve}). Calculating the Malliavin derivative of $L_{E}(t_0,t_1,t_2)$ we have that
\begin{equation*}
D_s L_{E}(t_0,t_1,t_2) = \frac{H(t_0,t_1)}{\delta_{t_1,t_2}H(t_0,t_2)} D_s \left(\frac{P_{ois}(t_0,t_1)}{P_{ois}(t_0,t_2)}\right).
\end{equation*}
Now from the zero-coupon representation formula (\ref{bond_ois}), we get that
\begin{equation*}
D_s \left(\frac{P_{ois}(t_0,t_1)}{P_{ois}(t_0,t_2)}\right) = \frac{\partial_{x}P_{ois}(t_0,t_1)P_{ois}(t_0,t_2) - \partial_{x}P_{ois}(t_0,t_2) P_{ois}(t_0,t_1)}{P^{2}_{ois}(t_0,t_2)} D_s x_{t_0}.
\end{equation*}
Therefore
\begin{equation}\label{malliavin_derive_L}
D_s L_{E}(t_0,t_1,t_2) = \frac{H(t_0,t_1)}{\delta_{t_1,t_2}H(t_0,t_2)}\frac{\partial_{x}P_{ois}(t_0,t_1)P_{ois}(t_0,t_2) - \partial_{x}P_{ois}(t_0,t_2) P_{ois}(t_0,t_1) }{P^{2}_{ois}(t_0,t_2)} D_s x_{t_0}.
\end{equation}
If we use (\ref{approsimation_E_s_x_t}) with $T_a=t_0$ and $\beta(t,t_0,x,y) = \exp\left(-\int_{s}^{T_a}k(w)dw \right) \eta(u,x,y)$, we have that
\begin{align*}
D_s L_{E}(t_0,t_1,t_2) &\approx \frac{H(t_0,t_1)}{\delta_{t_1,t_2}H(t_0,t_2)}\frac{\partial_{x}P_{ois}(t_0,t_1)P_{ois}(t_0,t_2) - \partial_{x}P_{ois}(t_0,t_2) P_{ois}(t_0,t_1)}{P^{2}_{ois}(t_0,t_2)} \beta(t,t_0,\bar{x}_s,\bar{y}_s)\bar{M}(s,t_0) \nonumber \\
\approx& \frac{P_{E}(0,t_1)}{\delta_{t_1,t_2} P_{E}(0,t_2)} \left(G(t_0,t_2) - G(t_0,t_1) \right)\beta(t,t_0,\bar{x}_s,\bar{y}_s)\bar{M}(s,t_0).
\end{align*}
Therefore
\begin{equation}\label{approximation_clarkocone}
\mathbb{E}_s\left[ D_s L_{E}(t_0,t_1,t_2) \right] = \frac{P_{E}(0,t_1)}{\delta_{t_1,t_2} P_{E}(0,t_2)} \Bigl(G(t_0,t_2)  - G(t_0,t_1)\Bigr)\beta(s,t_0,x_0,\bar{y}_s).
\end{equation}
Then from (\ref{ca_general_future}) and (\ref{approximation_clarkocone}) we find the approximation for the convexity adjustment for futures.

\subsection{Proof Theorem \ref{Th_CA_OIS}}\label{Proof_CA_OIS_futures}
To compute $\mathbb{E}^{\mathbb{Q}}\left[R(t_0,t_1)\right]$ we will use (\ref{general_convexity}), with $\mathbb{Q}_2 = \mathbb{Q}$, $\mathbb{Q}_1 = \mathbb{Q}^{t_1}$ and $x_{t_1}=R(t_0,t_1)$. If we apply $D_s$ on $I(t_0,t_1)$ we obtain that
\begin{equation*}
D_s I(t_0,t_1) = \int_{\max(s, t_{0})}^{t_1}  D_s x_u du.
\end{equation*}
Now, if $t < t_0$, then from (\ref{clark-okone}) and (\ref{approximation_spot_E_s_x_t}), we have that 
\begin{align}\label{apprx_I_t0_t1}
I(t_0,t_1) &= \mathbb{E}_t^{\mathbb{Q}}\left[I(t_0,t_1)\right] + \int_{t}^{t_1}\int_{\max(s, t_{0})}^{t_1}  \mathbb{E}_s^{\mathbb{Q}}\left[\beta(s,u,x_s,y_s) \bar{M}(s,u) \right] du dW_s^{\mathbb{Q}} \nonumber \\
&\approx \mathbb{E}_t^{\mathbb{Q}}\left[I(t_0,t_1)\right] + \int_{t}^{t_1}\int_{\max(s, t_{0})}^{t_1} \beta(s,u,\bar{x}_0(t_1),\bar{y}_s) du dW_s^{\mathbb{Q}}\nonumber \\
&= \mathbb{E}_t^{\mathbb{Q}}\left[I(t_0,t_1)\right] + \int_{t}^{t_1} g(s)h(s,\bar{x}_0(t_1),\bar{y}_s)\int_{\max(s, t_{0})}^{t_1} \exp\left( -\int_{s}^{u} k(w)dw \right) du dW_s^{\mathbb{Q}}.
\end{align}
Then, using the previous approximation, we get that
\begin{align*}
1 + \delta_{t_0,t_1}\mathbb{E}_t^{\mathbb{Q}}\left[R(t_0,t_1)\right] &= \mathbb{E}_t^{\mathbb{Q}}\left[ \exp(I(t_0,t_1)) \right]\\
&\approx \exp\left(\mathbb{E}_t^{\mathbb{Q}}\left[I(t_0,t_1)\right]\right)  \mathbb{E}^{\mathbb{Q}}\left[\exp\left(\int_{t}^{t_1} \Gamma(s,t_0,t_1)dW_s^{\mathbb{Q}}\right)\right]
\end{align*}
where 
\begin{align*}
\Gamma(s,t_0,t_1)= g(s)h(s,x_0,y_0)\int_{\max(s, t_{0})}^{t_1} \exp\left( -\int_{s}^{u} k(w)dw \right)du.
\end{align*}
Therefore, we have that
\begin{equation}\label{representation_i_0_t}
1 + \delta_{t_0,t_1}\mathbb{E}^{\mathbb{Q}}\left[R(t_0,t_1)\right] \approx \exp\left(\mathbb{E}^{\mathbb{Q}}\left[I(t_0,t_1)\right]\right)\exp\left(-\frac{1}{2}\int_{t}^{t_1}\Gamma^{2}(s,t_0,t_1) ds\right).
\end{equation}
Then, we obtain \eqref{convexity_ois_future}.\\

In order to get an approximation of $\mathbb{E}^{\mathbb{Q}}\left[R_{avg}(t_0,t_1)\right]$ with base $\mathbb{E}^{\mathbb{Q}}\left[R(t_0,t_1)\right]$, we must note that 
\begin{equation*}
\mathbb{E}_t^{\mathbb{Q}}\left[R_{avg}(t_0,t_1)\right] = \frac{\mathbb{E}_t^{\mathbb{Q}}\biggl[ \log\Bigl(1+ \delta_{t_0,t_1} R(t_0,t_1)\Bigr)\biggr]}{ \delta_{t_0,t_1}}.
\end{equation*}
Then from (\ref{representation_i_0_t}), we get \eqref{convexity_avg_ois_future}.

\subsection{Proof Theorem \ref{Th_CA_FRAsinArrears}}\label{Proof_CA_FRAsinArrears}
$L_{E}(t,t_1,t_2)$ is martingale under the measure $\mathbb{Q}^{t_2}$, therefore the expected value of \eqref{FRAinArrear} is taken with respect to the wrong martingale. To calculate the convexity adjustment, we use the Clark-Ocone formula to get a representation for $L_{E}(t_1,t_1,t_2)$, i.e
\begin{equation}\label{general_convexity_fras}
L_{E}(t_1,t_1,t_2) = \mathbb{E}^{t_2}\Bigl[L_{E}(t_1,t_1,t_2) \Bigr] + \int_{0}^{t_1} \mathbb{E}^{t_2}\Bigl[D_s L_{E}(t_1,t_1,t_2) \Bigr] dW^{\mathbb{Q}^{t_2}}_{s}.
\end{equation}
Under the HJM dynamics, we have the relation
\begin{equation*}
dW^{\mathbb{Q}^{t_2}}_s = dW^{\mathbb{Q}^{t_1}}_s + \Bigl(\nu(s,t_2) - \nu(s,t_1)\Bigr) ds. 
\end{equation*}
Taking $\mathbb{E}^{^{t_1}}(\cdot)$, we get that
\begin{align*}
\mathbb{E}^{t_1}\left[L_{E}(t_1,t_1,t_2) \right] &= L_{E}(0,t_1,t_2) + \mathbb{E}^{t_1}\left[ \int_{0}^{t_1} \mathbb{E}_s^{t_2}\Bigl[D_s L_{E}(t_1,t_1,t_2)\Bigr] dW^{t_2}_{s}\right]\\
&= L_{E}(0,t_1,t_2) + \mathbb{E}^{t_1}\left[ \int_{0}^{t_1} \mathbb{E}_s^{t_2}\Bigl[D_s L_{E}(t_1,t_1,t_2) \Bigr] \Bigl(\nu(s,t_2)-\nu(s,t_1)\Bigr) ds \right].
\end{align*}
Now from (\ref{approximation_D_s_x_t}) we have that 
\begin{eqnarray*}
D_s L(t_1,t_1,t_2) &=& \frac{G(t_1,t_2)}{\delta_{t_1,t_2}P_{E}(t_1,t_2)} D_s x_{t_1}\\
&\approx& \frac{G(t_1,t_2)}{\delta_{t_1,t_2}P_{E}(0,t_1,t_2)} \beta(s,t_1, \bar{x}_0(t_1), \bar{y}_s)\overline{DM}(s,t_1).
\end{eqnarray*}
Using the above approximation, we get
\begin{equation*}
CA(t_0,t_1,t_2) \approx \frac{G(t_1,t_2)}{\delta_{t_1,t_2}P_{E}(0,t_1,t_2)} \int_{0}^{t_1} \beta(s,t_1, \bar{x}_0(t_1), \bar{y}_s) \overline{DM}(s,t_1) \Bigl(\bar{\nu}(s,t_2,\bar{x}_{0}(t_1))-\bar{\nu}(s,t_1,\bar{x}_{0}(t_1))\Bigr) ds
\end{equation*}
with 
\begin{equation*}
\bar{\nu}(t,t_p, \bar{x}_0(t_a))= \int_{t}^{t_p} \eta(s,\bar{x}_0(t_a),\bar{y}_s) ds.   
\end{equation*}

\subsection{Proof Theorem \ref{Th_CA_CMS}}\label{Proof_CA_CMS}
Assume we have a cash flow in $t_a < t_p < t_b$ with value $S_{a,b}(t_a)$. Recall that $S_{a,b}(t_a)$ is a martingale under the measure $\mathbb{Q}^{01}$, but not under the measure $\mathbb{Q}^{t_p}$. Therefore, we take into consideration the effect to compute the expected value of $S_{a,b}(t_a)$ in a measure that is not its natural measure. Then, the convexity adjustment for a CMS is
\begin{equation}
CA_{CMS}(t_p) = \mathbb{E}^{t_p}\left[S_{a,b}(t_a)\right] - S_{a,b}(0).
\end{equation} 
After some changes of measure, we can see that
\begin{align}
\mathbb{E}^{t_p}\left(S_{a,b}(t_a)\right) &= \frac{1}{M(0,t_p)} \mathbb{E}^{01}\left[S_{a,b}(t_a) M(t_a,t_p)\right] \nonumber \\
&= \frac{1}{M(0,t_p)} \mathbb{E}^{01}\biggl[S_{a,b}(t_a) \mathbb{E}^{01}\Bigl[ M(t_a,t_p)|S_{a,b}(t_a)\Bigr] \biggr] \nonumber
\end{align}
with 
\begin{align*}
M(t,t_p)= \frac{P_{ois}(t,t_p)}{01(t,t_a,t_p)}.
\end{align*}
Then, we can approximate $\mathbb{E}^{t_p}\left(S_{a,b}(t_a)\right)$ by
\begin{align}\label{expected_t_p_cms}
\mathbb{E}^{t_p}\left(S_{a,b}(t_a)\right) &\approx  \frac{1}{M(0,t_p)} \mathbb{E}^{01}\biggl[S^{ois}_{a,b}(t_a) \mathbb{E}^{01}\Bigl[ M(t_a,t_p)|S_{a,b}(t_a)\Bigr] \biggr] \\
\qquad\qquad &+ \frac{\sum_{i=1}^{n_E}\delta_{t^{E}_{i-1}, t^{E}_i} \alpha(0,t^{E}_{i-1}, t^{E}_{i}) P_{ois}(0,t^{E}_{i})} {01(0,t_a,t_b)}.
\end{align}

From the previous expression, we must note that under the assumption that there are not stochastic basis, we must compute the convexity adjustment for the OIS swap rate. But a complicated point is to calculate the expected value 
\begin{equation}\label{Exp_M_conditional}
\mathbb{E}^{01}\Bigl[ M(t_a,t_p)|S_{a,b}(t_a)\Bigr].
\end{equation}
To reduce this complexity, it is a common practice to assume that $M(t_a,t_p)$ is a function of the swap rate $S_{a,b}(t_a)$, i.e   $M(t_a,t_p)=f(S_{a,b}(t_a))$. Under this assumption (\ref{Exp_M_conditional}) is trivial to calculate it. The function $f(\cdot)$ is known as the mapping function. There is a vast literature about how to choose it (see \cite{AndreasenPiterbargIII} or \cite{Hagan20}). In the present paper, we will not assume any mapping function. \\ 
We will choose $\bar{x}_0(t_a)$ such that $S^{ois}_{a,b}(t_a,\bar{x}_0(t_a),\bar{y}_{t_a})=S^{ois}_{a,b}(t_a,0,0)$. Now, if we apply the Clark-Ocone formula to $M(t_a,t_p)$ we get that
\begin{equation} \label{clark_ocone_swap_m}
M(t_a,t_p) = M(0,t_p)+ \int_{0}^{t_a} \mathbb{E}_s^{01}\Bigl[D_s M(t_a,t_p)\Bigr] dW^{01}_s.
\end{equation}
Then, if we substitute the last expressions in (\ref{expected_t_p_cms}), we obtain that
\begin{align}\label{cms_first_order_convexity_prev}
\mathbb{E}^{t_p}\left(S_{a,b}(t_a)\right) =& S^{ois}_{a,b}(0) + \frac{1}{M(0,t_p)} \mathbb{E}^{01}\left[ S^{ois}_{a,b}(t_a) \int_{0}^{t_a} \mathbb{E}^{0,1}_s\Bigl[D_s x_{t_a}\partial_x M(t_a,t_p)  \Bigr] dW^{0,1}_s   \right] \nonumber \\
=&  S^{ois}_{a,b}(0) + \frac{1}{M(0,t_p)} \mathbb{E}^{01}\left[\int_{0}^{t_a} D_s x_{t_a} \partial_x S^{ois}_{a,b}(t_a) \mathbb{E}^{0,1}_s\Bigl[D_s x_{t_a}\partial_x M(t_a,t_p)  \Bigr] ds   \right].
\end{align}
Now from (\ref{approximation_E_01_Ds_x_t}) and choosing $\bar{x}_0(t_a)$ such that $S^{ois}_{a,b}(t_a,\bar{x}_0(t_a),\bar{y}_{t_a})=S^{ois}_{a,b}(0)$, we have that
\begin{align}
\mathbb{E}^{t_p}\left[S_{a,b}(t_a)\right] &\approx  S^{ois}_{a,b}(0) + \frac{\partial_x S^{ois}_{a,b}(t_a,\bar{x}_0(t_a), \bar{y}_{t_a}) \partial_x M(t_a,t_p,\bar{x}_0(t_a), \bar{y}_{t_a})} {M(0,t_p)}\nonumber \\
&\cdot \int_{0}^{t_a}  \beta^2(s,t_a,\bar{x}_0(t_a),\bar{y}_s)\exp\left(-\int_{s}^{t_a}\partial_x (\beta(u,t_a,\bar{x}_u,\bar{y}_u) \mu(u,\bar{x}_u, \bar{y}_u,t_a,t_b))|_{\bar{x}_u=\bar{x}_{0}(t_a)}  du \right)ds.
\end{align}

\bibliography{references/references, references/references-books,references/references-own,references/references-online}

\end{document}